\documentclass[english]{scrartcl}
\usepackage[T1]{fontenc}
\usepackage[utf8]{inputenc}
\usepackage{geometry}
\usepackage[active]{srcltx}
\usepackage{color}
\usepackage{babel}
\usepackage{enumitem}
\usepackage{bm}
\usepackage{amsmath}
\usepackage{amsthm}
\usepackage{amssymb}
\usepackage{float}
\usepackage{esint}
\usepackage{listings}
\usepackage{mathtools}
\usepackage[numbers]{natbib}
\usepackage[unicode=true,pdfusetitle,bookmarks=true,bookmarksnumbered=true,bookmarksopen=true,bookmarksopenlevel=1,
 breaklinks=true,pdfborder={0 0 1},backref=false,colorlinks=true]
 {hyperref}
\hypersetup{
 linkcolor=blue, citecolor=blue, urlcolor=blue, filecolor=blue,pdfpagelayout=OneColumn, pdfnewwindow=true,pdfstartview=XYZ, plainpages=false}

\makeatletter
\theoremstyle{plain}
\newtheorem{thm}{\protect\theoremname}
\theoremstyle{plain}
\newtheorem{lem}{\protect\lemmaname}
\theoremstyle{remark}
\newtheorem{rem}{\protect\remarkname}
\theoremstyle{plain}
\newtheorem{assumption}{\protect\assumptionname}
\theoremstyle{plain}
\newtheorem{prop}[thm]{\protect\propositionname}
\theoremstyle{plain}
\newtheorem{cor}{\protect\corollaryname}
\theoremstyle{remark}
\newtheorem{notation}{\protect\notationname}

\usepackage{scrlayer}
\usepackage{lastpage}
\usepackage{amsmath}
\usepackage{amssymb}
\usepackage{graphicx}
\usepackage{fancybox} 
\usepackage{moreverb} 
\usepackage{listings} 
\usepackage{backref}
\usepackage{bm}
\usepackage{array}
\usepackage{makecell}
\usepackage{multirow}
\usepackage{authblk}
\usepackage{fontawesome5}
\usepackage{lipsum}
\usepackage{subcaption}
\usepackage{stackrel}
\usepackage{scalefnt}

\usepackage{ifpdf} 
\ifpdf 

 \IfFileExists{lmodern.sty}{\usepackage{lmodern}}{}

\fi 

\let\myTOC\tableofcontents
\renewcommand\tableofcontents{%
  \pdfbookmark[1]{\contentsname}{}
  \myTOC
}

\def\LyX{\texorpdfstring{%
  L\kern-.1667em\lower.25em\hbox{Y}\kern-.125emX\@}
  {LyX}}

\@addtoreset{footnote}{section}

\renewcommand*{\backref}[1]{}
\renewcommand*{\backrefalt}[4]{%
   \ifcase #1 
    \or 
      (Cited on page~#2)%
   \else
      (Cited on pages~#2)
    \fi} 

\usepackage{dsfont}
\usepackage{bbm}

\usepackage[dvipsnames]{xcolor}
\definecolor{blue}{HTML}{1F77B4}
\definecolor{orange}{HTML}{FF7F0E}
\definecolor{green}{HTML}{2CA02C}
\definecolor{red}{HTML}{D62728}
\definecolor{purple}{HTML}{9467BD}
\definecolor{brown}{HTML}{8C564B}
\definecolor{pink}{HTML}{E377C2}
\definecolor{grey}{HTML}{7F7F7F}
\definecolor{yellow}{HTML}{BCBD22}
\definecolor{cyan}{HTML}{17BECF}
\definecolor{turquoise}{HTML}{3FE0D0}

\usepackage[algo2e,ruled,vlined]{algorithm2e}
\SetKwInput{KwIn}{Inputs}\SetKwInput{KwOut}{Outputs}
\usepackage{xcolor}
\definecolor{algoColorKeyword}{named}{blue}
\definecolor{algoColorComment}{named}{olive}

\SetKwComment{Comment}{$\triangleright$\ }{}

\usepackage{enumitem}
\setlist{leftmargin=*, topsep=0.5em, parsep=0pt, itemsep=1em, labelindent=0pt, align=left}

\makeatother

\providecommand{\assumptionname}{Assumption}
\providecommand{\corollaryname}{Corollary}
\providecommand{\lemmaname}{Lemma}
\providecommand{\notationname}{Notation}
\providecommand{\propositionname}{Proposition}
\providecommand{\remarkname}{Remark}
\providecommand{\theoremname}{Theorem}

\usepackage{orcidlink}
\definecolor{orcidlogocol}{HTML}{2E7E9F}

\begin{document}
\title{Signed random Fourier features for fast density estimation with indefinite kernels}

\author[1]{\orcidlinki{\textcolor{black}{Xie Wang}}{0009-0006-6007-2702}} 
\author[2]{\orcidlinki{\textcolor{black}{Nicolas Langren\'e}}{0000-0001-7601-4618}\thanks{Corresponding author, nicolaslangrene@bnbu.edu.cn}} 
\author[2]{\orcidlinki{\textcolor{black}{Wen Chen}}{0000-0002-7268-0979}}
\affil[1]{\normalsize Mathematical Institute, University of Oxford, Oxford, OX2 6GG, UK}
\affil[2]{\normalsize Guangdong Provincial/Zhuhai Key Laboratory of Interdisciplinary Research and Application for Data Science, Beijing Normal-Hong Kong Baptist University, Zhuhai 519087, China.}

\date{\today}

\maketitle
\begin{abstract}
Kernel density estimation (KDE) is one of the most fundamental statistical
estimators of density functions. Its direct implementation on a dataset
of $N$ points incurs an $\mathcal{O}(N^{2})$ computational cost, which is prohibitive for large-scale datasets. Kernel approximation
techniques can be applied to bring the computational cost down to
$\mathcal{O}(N)$. The random Fourier features (RFF) technique, based
on sampling from the spectral density of the kernel function, has
become popular to speed up kernel estimators for machine learning
applications. Unfortunately, it is restricted to positive definite
kernels, while the majority of kernel functions popular in KDE, such
as the parabolic kernel, do not satisfy this property. To overcome this limitation, this article introduces the signed random
Fourier features (SRFF) technique. It is a generalization of RFF compatible
with indefinite kernels whose inverse Fourier transform is absolutely
integrable. The motivation for introducing this method is to speed
up KDE in the case of multivariate compact kernels, which are generally not positive definite. We detail how to implement SRFF for both product kernels and isotropic kernels. For the class of Kuttner-Golubov kernels $K(\boldsymbol{x}_{i},\boldsymbol{x}_{j})=(1-\left\Vert \boldsymbol{x}_{i}-\boldsymbol{x}_{j}\right\Vert ^{\alpha})^{\beta}\mathbbm{1}_{\{\left\Vert \boldsymbol{x}_{i}-\boldsymbol{x}_{j}\right\Vert \leq1\}}$
where $\boldsymbol{x}_{i}\in\mathbb{R}^{d}$, $\boldsymbol{x}_{j}\in\mathbb{R}^{d}$,
$\alpha>0$, $\beta>0$, which includes the triangular, parabolic,
biweight, triweight, and other kernel functions of interest for KDE
as particular examples, we provide an explicit acceptance-rejection algorithm
to sample from its signed spectral density. Our numerical tests on a dataset of one million points confirm the computational efficiency and accuracy of SRFF for large-scale KDE.\\

\textbf{Keywords}: signed Monte Carlo, spectral Monte Carlo, random Fourier features, random projections, kernel density estimation, acceptance-rejection method, Kuttner-Golubov kernel.
\end{abstract}

\section{Introduction}

Kernel-based estimators, such as kernel density estimators, are a
cornerstone of statistical science. Their implementation, however,
suffers from scalability issues: the computation of all the pairwise
kernel values $K(\boldsymbol{x}_{i},\boldsymbol{x}_{j})$ for a kernel
function $K$ on a dataset $\boldsymbol{x}_{1},\ldots,\boldsymbol{x}_{N}$
of $N$ points valued in $\mathbb{R}^{d}$, $1\leq i,j\leq N$, requires
$\mathcal{O}(N^{2})$ operations and $\mathcal{O}(N^{2})$ memory
space, which can quickly become prohibitive for large-scale datasets.
Popular approximations of kernel functions in the field of machine
learning, such as the Nystr\"om method \citep{nystrom1930praktische,williams2000nystrom}
or random Fourier features \citep{rahimi2007random} can be used to
reduce these computational and memory costs. However, these approximation
techniques require the kernel function $K$ to be positive definite
(Subsection~\ref{subsec:rff}). In the context of kernel density
estimation, this is a strong limitation. With a few exceptions, such
as the Gaussian kernel, most kernel functions of interest in KDE,
such as the parabolic kernel \citep{epanechnikov1969nonparametric},
are not positive definite, making these modern kernel decomposition
techniques inapplicable.

When a kernel function $K$ is positive definite, the inverse Fourier transform
of $K$ is proportional to a density $f$, called spectral density
(Bochner's theorem, Subsection~\ref{subsec:rff}). The RFF technique
is constructed upon simulations from this spectral density. When $K$
is not positive definite, such as the parabolic kernel $K(\boldsymbol{x}_{i},\boldsymbol{x}_{j})=(1-\left\Vert \boldsymbol{x}_{i}-\boldsymbol{x}_{j}\right\Vert ^{2})\mathbbm{1}_{\{\left\Vert \boldsymbol{x}_{i}-\boldsymbol{x}_{j}\right\Vert \leq1\}}$,
the inverse Fourier transform $f$ still exists but is not a density anymore,
as it takes negative values (Subsection~\ref{subsec:symmetric_beta_kernels}).
From here, some articles such as \citep{pennington2015spherical}
propose to approximate $f$ by a tractable nonnegative function, effectively
overlooking the negative part of the spectral signed measure. Others,
such as \citep{liu2021fast,luo2021towards,he2024random},
propose to simulate $f_{+}/\left\Vert f_{+}\right\Vert _{1}$ and
$f_{-}/\left\Vert f_{-}\right\Vert _{1}$ separately, where $f_{+}=\max(f,0)$,
$f_{-}=\max(-f,0)$, $\left\Vert f\right\Vert _{1}=\int_{\mathbb{R}^{d}}|f(\boldsymbol{x})|d\boldsymbol{x}$,
provided $f$ is absolutely integrable (and therefore associated with
a finite signed measure), and to apply the weights
$\left\Vert f_{+}\right\Vert_{1} $ and $\left\Vert f_{-}\right\Vert_{1} $
to these simulated frequencies in the RFF estimator. One downside
is that these weights are not analytical and their estimation may
require costly computations.

In this article, we follow the latter approach, and improve it along
several directions. Our key contribution is the introduction of the
\textit{signed random Fourier features} (SRFF) technique (Subsection~\ref{subsec:srff}),
constructed upon simulations from the absolute spectral density $\left|f\right|/\left\Vert f\right\Vert _{1}$,
using a carefully designed acceptance-rejection algorithm (Subsection~\ref{subsec:rejection_sampling}),
and attaching to each simulated frequency the sign of the spectrum at
that point. Compared to existing random features for indefinite kernels,
its implementation does not require to compute the weights $\left\Vert f_{+}\right\Vert _{1}$, $\left\Vert f_{-}\right\Vert _{1}$, or $\left\Vert f\right\Vert _{1}$,
and it is shown to be more efficient than sampling from $f_{+}/\left\Vert f_{+}\right\Vert _{1}$
and $f_{-}/\left\Vert f_{-}\right\Vert _{1}$ separately (Subsection~\ref{subsec:rejection_sampling}).
Then, we detail how to implement SRFF for fast kernel density estimation
using the large class of multivariate compact kernels $K(\boldsymbol{x}_{i},\boldsymbol{x}_{j})=(1-\left\Vert \boldsymbol{x}_{i}-\boldsymbol{x}_{j}\right\Vert ^{\alpha})^{\beta}\mathbbm{1}_{\{\left\Vert \boldsymbol{x}_{i}-\boldsymbol{x}_{j}\right\Vert \leq1\}}$,
$\alpha>0$, $\beta>0$, which contains many multivariate kernel functions
of interest in KDE, such as the triangular, parabolic, biweight, triweight,
and tricube kernels (\citep{silverman1986density,loader1999local,scott2015multivariate,chacon2018multivariate},
Subsection~\ref{subsec:choice_of_kernel}), none of which are positive
definite on multivariate datasets. While the existing theory of reproducing
kernel Kre\u{\i}n spaces (RKKS, \citep{ong2004learning,huang2017indefinite})
has shown that most kernel-based estimators still work in the case
of indefinite kernels, our work is, to our knowledge, the first application
of random features to the problem of multivariate kernel density estimation
with indefinite kernels. Finally, our numerical tests on a dataset
of $N=1{,}000{,}000$ points confirm the scalability offered by SRFF
for large-scale KDE tasks.

The paper is organized as follows. Section~\ref{sec:signed_monte_carlo}
introduces a generalization of the Monte Carlo simulation method to
integrals involving signed densities. Section~\ref{sec:signed_rff}
applies this extension to the inverse Fourier transform of kernel functions,
yielding an extension of random Fourier features (RFF) to indefinite
kernels, called signed random Fourier features (SRFF). Section~\ref{sec:fast_kde}
details how to use SRFF to speed the computation of kernel density
estimators. Section~\ref{sec:numerical} illustrates numerically
the speed and accuracy of the proposed method, and Section~\ref{sec:conclusion}
concludes the paper.

\section{Signed Monte Carlo\label{sec:signed_monte_carlo}}

In this section, we introduce a simple extension of the Monte Carlo integration method, called signed Monte Carlo, which is concerned with the sampling of finite signed measures. This technique will then be applied to kernel function approximation in the subsequent section~\ref{sec:signed_rff}.

\subsection{Signed Monte Carlo formula}

Let $\boldsymbol{X}=(X_{1},\ldots,X_{d})\in\mathbb{R}^{d}$ be a real-valued, $d$ -dimensional continuous random
vector with probability density function $f\colon \mathbb{R}^{d}\to\mathbb{R}$,
and let $g\colon\mathbb{R}^{d}\to\mathbb{R}$ be a
measurable function such that $fg$ 
is absolutely integrable. By elementary probability theory, we have
the following
\begin{equation}
\underset{\hspace{0.25em}\mathbb{R}^{d}}{\int}\hspace{-0.125em}f(\boldsymbol{x})g(\boldsymbol{x})d\boldsymbol{x}=\mathbb{E}\left[g(\boldsymbol{X})\right].\label{eq:EgX}
\end{equation}
This probabilistic representation~\eqref{eq:EgX} suggests the following
unbiased Monte Carlo estimator of the integral $\int_{\mathbb{R}^{d}}f(\boldsymbol{x})g(\boldsymbol{x})d\boldsymbol{x}$:
\begin{equation}
\frac{1}{M}\sum_{m=1}^{M}g(\boldsymbol{X}_{m})\label{eq:EgX_MC}
\end{equation}
where $\boldsymbol{X}_{1}$, $\ldots$, $\boldsymbol{X}_{M}$ are $M$ i.i.d. simulations from the distribution of $\boldsymbol{X}$. This is known as the Monte Carlo integration method. We now extend this construction to the case where the function $f$ is no longer a probability density function. In particular, $f$ is not nonnegative everywhere. From here onwards, we relax the assumption that $f$ is a probability density and only assume that $f$ is absolutely integrable over $\mathbb{R}^{d}$, with no non-negativity imposed. 

\begin{lem}\label{lem:f_decomposition}
Let $f\colon\mathbb{R}^{d}\to\mathbb{R}$ be an absolutely integrable function, not identically equal to zero. Then
\begin{equation}
f=\mathrm{sign}(f)\left\Vert f\right\Vert _{1}\overset{_{\,\frown}}{f}\label{eq:f_decomposition}
\end{equation}
where $\left\Vert f\right\Vert _{1}:={\int}_{\mathbb{R}^{d}}|f(\boldsymbol{x})|d\boldsymbol{x}$ is such that $0<\left\Vert f\right\Vert _{1}<\infty$, $\overset{_{\,\frown}}{f}:=\frac{|f|}{\left\Vert f\right\Vert _{1}}$ is a probability density function, and where the $\mathrm{sign}$ function is defined as
$$\mathrm{sign}(x):=
\begin{cases}
\hspace{0.75em}1 & ,\ x>0\\
\hspace{0.75em}0 & ,\ x=0\\
-1 & ,\ x<0
\end{cases}$$
\end{lem}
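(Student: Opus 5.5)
The plan is to verify the three claims in the order they depend on each other: first the bounds $0<\left\Vert f\right\Vert_1<\infty$, then that $\overset{_{\,\frown}}{f}$ is a probability density, and finally the pointwise identity \eqref{eq:f_decomposition}.

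For the bounds, finiteness is exactly the hypothesis that $f$ is absolutely integrable. For strict positivity, suppose $\left\Vert f\right\Vert_1=0$. Since $|f|\geq0$ is measurable with zero integral, $|f|=0$ almost everywhere, so $f=0$ almost everywhere. This is where the hypothesis ``not identically equal to zero'' has to be read correctly: it must mean that $f$ is nonzero as an element of $L^1$, i.e.\ that $f$ is not zero almost everywhere. Under that reading we get $\left\Vert f\right\Vert_1>0$. I expect this interpretive point to be the only real obstacle, since a function that is nonzero only on a null set would otherwise be a counterexample. I would state the convention explicitly, which is natural since $f$ is only defined up to null sets in the integrable setting.

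With $0<\left\Vert f\right\Vert_1<\infty$ in hand, $\overset{_{\,\frown}}{f}=|f|/\left\Vert f\right\Vert_1$ is well defined, measurable and nonnegative. Its integral is $\int|f|/\left\Vert f\right\Vert_1=1$ by linearity of the integral, so it is a probability density.

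For the identity, I would check it pointwise. Multiplying out gives
\[
\mathrm{sign}(f(\boldsymbol{x}))\left\Vert f\right\Vert_1\overset{_{\,\frown}}{f}(\boldsymbol{x})=\mathrm{sign}(f(\boldsymbol{x}))\,|f(\boldsymbol{x})|,
\]
so it suffices to show $\mathrm{sign}(y)|y|=y$ for every real $y$. The three cases of the definition of $\mathrm{sign}$ give this directly:
\begin{itemize}
\item if $y>0$, the product is $1\cdot y=y$;
\item if $y=0$, the product is $0$;
\item if $y<0$, the product is $-1\cdot(-y)=y$.
\end{itemize}
This holds at every $\boldsymbol{x}$, not merely almost everywhere, which completes the proof.
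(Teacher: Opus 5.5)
Your proof is correct and follows the paper's argument: the pointwise identity $\mathrm{sign}(y)|y|=y$, followed by multiplying and dividing by $\left\Vert f\right\Vert_1$. Your observation that ``not identically equal to zero'' must be read as ``not zero almost everywhere'' is a genuine precision that the paper's one-line proof glosses over; without it, $\left\Vert f\right\Vert_1>0$ need not hold.
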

\begin{proof}
Simply use the fact that $f(\boldsymbol{x})=\mathrm{sign}(f(\boldsymbol{x}))|f(\boldsymbol{x})|$ for all $\boldsymbol{x}\in\mathbb{R}^d$, and then multiply and divide by $\left\Vert f\right\Vert _{1}$ on the right hand side, which, from the assumptions, exists, is positive and finite.
\end{proof}

\begin{lem}
\label{lem:signed_MC} Let $f\colon\mathbb{R}^{d}\to\mathbb{R}$
be an absolutely integrable function such that $\int_{\mathbb{R}^{d}}f(\boldsymbol{x})d\boldsymbol{x}\neq0$,
and let $g\colon\mathbb{R}^{d}\to\mathbb{R}$ be a
measurable function such that $fg$
is absolutely integrable. Then, the following holds
\begin{equation}
\underset{\hspace{0.25em}\mathbb{R}^{d}}{\int}\hspace{-0.125em}f(\boldsymbol{x})g(\boldsymbol{x})d\boldsymbol{x}=\bigg(\underset{\hspace{0.25em}\mathbb{R}^{d}}{\int}\hspace{-0.125em}f(\boldsymbol{x})d\boldsymbol{x}\bigg)\frac{\mathbb{E}\left[\mathrm{sign}(f(\boldsymbol{X}))g(\boldsymbol{X})\right]}{\mathbb{E}\left[\mathrm{sign}(f(\boldsymbol{X}))\right]}\label{eq:signed_MC}
\end{equation}
where $\boldsymbol{X}=(X_{1},\ldots,X_{d})\in\mathbb{R}^{d}$ is a continuous
random vector with probability density function $\overset{_{\,\frown}}{f}=|f|/\left\Vert f\right\Vert _{1}$, and where the $\mathrm{sign}$ function is defined in Lemma~\ref{lem:f_decomposition}.
\end{lem}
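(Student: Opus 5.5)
The plan is to reduce everything to the decomposition of Lemma~\ref{lem:f_decomposition} and the elementary identity \eqref{eq:EgX}, applied with the probability density $\overset{_{\,\frown}}{f}$ in place of $f$.

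First, I will note that $f$ is not identically zero, since $\int_{\mathbb{R}^{d}}f(\boldsymbol{x})d\boldsymbol{x}\neq0$. Hence Lemma~\ref{lem:f_decomposition} applies: $0<\left\Vert f\right\Vert _{1}<\infty$, $\overset{_{\,\frown}}{f}$ is a genuine probability density, and the random vector $\boldsymbol{X}$ in the statement is well defined.

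Second, I will write
\[
\int_{\mathbb{R}^{d}}f(\boldsymbol{x})g(\boldsymbol{x})d\boldsymbol{x}=\left\Vert f\right\Vert _{1}\int_{\mathbb{R}^{d}}\overset{_{\,\frown}}{f}(\boldsymbol{x})\,\mathrm{sign}(f(\boldsymbol{x}))g(\boldsymbol{x})d\boldsymbol{x}=\left\Vert f\right\Vert _{1}\mathbb{E}\left[\mathrm{sign}(f(\boldsymbol{X}))g(\boldsymbol{X})\right].
\]
The first equality is \eqref{eq:f_decomposition}. The second is \eqref{eq:EgX} applied to the measurable function $\mathrm{sign}(f)g$. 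Its hypothesis holds because $\overset{_{\,\frown}}{f}\,|\mathrm{sign}(f)g|\leq|fg|/\left\Vert f\right\Vert _{1}$, which is integrable since $fg$ is, so the expectation is finite.

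Third, I will take $g\equiv1$ in the same computation. Integrability is automatic here because $f\in L^{1}$. This gives
\[
\int_{\mathbb{R}^{d}}f(\boldsymbol{x})d\boldsymbol{x}=\left\Vert f\right\Vert _{1}\mathbb{E}\left[\mathrm{sign}(f(\boldsymbol{X}))\right].
\]
Since the left side is nonzero, $\mathbb{E}[\mathrm{sign}(f(\boldsymbol{X}))]\neq0$. Solving for $\left\Vert f\right\Vert _{1}$ and substituting into the second step yields \eqref{eq:signed_MC}.

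There is no real obstacle. The only points needing care are the integrability checks that justify invoking \eqref{eq:EgX}, and the observation that the hypothesis $\int f\neq0$ is exactly what makes the denominator nonzero. A minor subtlety is that on the set $\{f=0\}$ the sign vanishes, but that set has $\overset{_{\,\frown}}{f}$-probability zero, so it plays no role.
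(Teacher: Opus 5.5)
Your proposal is correct and follows the paper's proof essentially step for step. Both arguments decompose via Lemma~\ref{lem:f_decomposition}, write $\int fg=\left\Vert f\right\Vert _{1}\mathbb{E}[\mathrm{sign}(f(\boldsymbol{X}))g(\boldsymbol{X})]$, specialize to $g\equiv1$, and divide. Your explicit integrability check and your remark that $\int f\neq0$ forces $\mathbb{E}[\mathrm{sign}(f(\boldsymbol{X}))]\neq0$ are useful details that the paper leaves implicit.
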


\begin{proof}
First, remark that the assumption $\int_{\mathbb{R}^{d}}f(\boldsymbol{x})d\boldsymbol{x}\neq0$
implies that $f$ is not uniformly equal to zero, and therefore
$\left\Vert f\right\Vert _{1}>0$. This 
ensures that $\overset{_{\,\frown}}{f}=|f|/\left\Vert f\right\Vert _{1}$ is a well-defined probability density function. Then, using Lemma~\ref{lem:f_decomposition},\\
\begin{equation}
\underset{\hspace{0.25em}\mathbb{R}^{d}}{\int}\hspace{-0.125em}f(\boldsymbol{x})g(\boldsymbol{x})d\boldsymbol{x}=\left\Vert f\right\Vert _{1}\!\underset{\hspace{0.25em}\mathbb{R}^{d}}{\int}\hspace{-0.125em}\mathrm{sign}(f(\boldsymbol{x}))g(\boldsymbol{x})\overset{_{\,\frown}}{f}(\boldsymbol{x})d\boldsymbol{x}=\left\Vert f\right\Vert _{1}\mathbb{E}\left[\mathrm{sign}(f(\boldsymbol{X}))g(\boldsymbol{X})\right]\label{eq:intfg}
\end{equation}
where the random vector $\boldsymbol{X}$ has density $\overset{_{\,\frown}}{f}$. Next, applying equation~\eqref{eq:intfg} with $g\equiv 1$ gives
\begin{equation}
\underset{\hspace{0.25em}\mathbb{R}^{d}}{\int}\hspace{-0.125em}f(\boldsymbol{x})d\boldsymbol{x}=\left\Vert f\right\Vert _{1}\mathbb{E}\left[\mathrm{sign}(f(\boldsymbol{X}))\right]\label{eq:intf}
\end{equation}
Finally, dividing equation~\eqref{eq:intfg} by equation~\eqref{eq:intf}, using the fact that $\int_{\mathbb{R}^{d}}f(\boldsymbol{x})d\boldsymbol{x}\neq0$, yields equation~\eqref{eq:signed_MC}.
\end{proof}

\begin{rem}
The classical formula~\eqref{eq:EgX}, where $f$ is probability density function, is a special case of~\eqref{eq:signed_MC}.
Indeed, if $f$ happens to be a probability density function, then $\int_{\mathbb{R}^{d}}f(\boldsymbol{x})d\boldsymbol{x}=1$,
$\boldsymbol{X}$ has density $f$, and $\mathrm{sign}(f(\boldsymbol{X}))=1$ almost surely.
\end{rem}

\begin{rem}
The proof of Lemma \ref{lem:signed_MC} shows that equation~\eqref{eq:intfg} provides an alternative representation for $\int_{\mathbb{R}^{d}}f(\boldsymbol{x})g(\boldsymbol{x})d\boldsymbol{x}$ as an expectation. Moreover, unlike equation~\eqref{eq:signed_MC}, equation~\eqref{eq:intfg}
does not require $\int_{\mathbb{R}^{d}}f(\boldsymbol{x})d\boldsymbol{x}\neq0$.
The reason why we favour equation~\eqref{eq:signed_MC} over equation~\eqref{eq:intfg} is that in most applications, such as the random Fourier features application from Section~\ref{sec:signed_rff}, $\int_{\mathbb{R}^{d}}f(\boldsymbol{x})d\boldsymbol{x}$ is known explicitly, while $\left\Vert f\right\Vert _{1}$ has generally no closed-form analytical expression.
\end{rem}

\begin{notation}\label{not:signed_measure}
In the rest of the paper, a function $f\colon\mathbb{R}^{d}\to\mathbb{R}$ as defined in Lemma~\ref{lem:signed_MC} will be called a \emph{signed density}, and we say that a random variable $X$ \emph{has signed density} $f$ if for any measurable function $g\colon\mathbb{R}^{d}\to\mathbb{R}$ such that $fg$ is absolutely integrable, the expectation $\mathbb{E}[g(X)]$ is given by equation~\eqref{eq:signed_MC}. This is a natural way to define the density of a signed measure (see also \citep{monchietti2025measure,polson2025negative}).
\end{notation}

\subsection{Signed Monte Carlo estimator\label{subsec:mc_estimator}}

Lemma \ref{lem:signed_MC} suggests the following signed Monte Carlo
estimator for the integral $\int_{\mathbb{R}^{d}}f(\boldsymbol{x})g(\boldsymbol{x})d\boldsymbol{x}$:
\begin{equation}
\underset{\hspace{0.25em}\mathbb{R}^{d}}{\int}\hspace{-0.125em}f(\boldsymbol{x})g(\boldsymbol{x})d\boldsymbol{x} \ \simeq\ \bigg(\underset{\hspace{0.25em}\mathbb{R}^{d}}{\int}\hspace{-0.125em}f(\boldsymbol{x})d\boldsymbol{x}\bigg)\frac{\sum_{m=1}^{M}\mathrm{sign}(f(\boldsymbol{X}_{m}))g(\boldsymbol{X}_{m})}{\sum_{m=1}^{M}\mathrm{sign}(f(\boldsymbol{X}_{m}))}\label{eq:EgX_SMC}
\end{equation}
where $\boldsymbol{X}_{1}=(X_{1,1},\ldots,X_{d,1})$, $\ldots$, $\boldsymbol{X}_{M}=(X_{1,M},\ldots,X_{d,M})$
are $M$ i.i.d. vector simulations from the distribution of $\boldsymbol{X}=(X_{1},\ldots,X_{d})$.
Indeed, using the strong law of large numbers and Lemma \ref{lem:signed_MC}, the
Monte Carlo estimator~\eqref{eq:EgX_SMC} converges almost surely as $M\to \infty$:
{\scalefont{0.93}
\[
\bigg(\underset{\hspace{0.25em}\mathbb{R}^{d}}{\int}\hspace{-0.125em}f(\boldsymbol{x})d\boldsymbol{x}\bigg)\frac{\sum_{m=1}^{M}\mathrm{sign}(f(\boldsymbol{X}_{m}))g(\boldsymbol{X}_{m})}{\sum_{m=1}^{M}\mathrm{sign}(f(\boldsymbol{X}_{m}))}\stackrel[M\to\infty]{a.s.}{\longrightarrow}\bigg(\underset{\hspace{0.25em}\mathbb{R}^{d}}{\int}\hspace{-0.125em}f(\boldsymbol{x})d\boldsymbol{x}\bigg)\frac{\mathbb{E}\left[\mathrm{sign}(f(\boldsymbol{X}))g(\boldsymbol{X})\right]}{\mathbb{E}\left[\mathrm{sign}(f(\boldsymbol{X}))\right]}=\underset{\hspace{0.25em}\mathbb{R}^{d}}{\int}\hspace{-0.125em}f(\boldsymbol{x})g(\boldsymbol{x})d\boldsymbol{x}.
\]
}

We refer to equation~\eqref{eq:EgX_SMC} as the Signed Monte Carlo (SMC) integration method, because each Monte Carlo simulation $g(\boldsymbol{X}_m)$ is multiplied by either $+1$ or $-1$, depending on the sign of $f(\boldsymbol{X}_m)$. 

\begin{rem}
In practice, a simple way to make the random event $\left\{\sum_{m=1}^{M}\mathrm{sign}(f(\boldsymbol{X}_{m}))=0\right\}$ a.s. impossible in equation~\eqref{eq:EgX_SMC} is to set the number of Monte Carlo samples $M$ as an odd number.
\end{rem}

\section{Signed random Fourier features\label{sec:signed_rff}}

In this section, we extend the spectral Monte Carlo idea underpinning
the random Fourier features \citep{rahimi2007random} technique to
the signed Monte Carlo approach developed in the previous section.
This extension enables us to construct random Fourier features
for kernel functions which are not positive definite. Subsection~\ref{subsec:rff}
recalls the details of the standard random Fourier features (RFF) construction,
and Subsection~\ref{subsec:srff} introduces the aforementioned
extension, called signed random Fourier features (SRFF). Finally,
Subsection~\ref{subsec:special_kernels} simplifies the SRFF construction
in the case of multivariate product and isotropic kernels, which are
of practical importance for kernel density estimation.

\subsection{Random Fourier features (RFF)\label{subsec:rff}}

Let $K\colon\mathbb{R}^{d}\to\mathbb{R}$ be a continuous,
stationary kernel function. $K$ is called positive
definite if for any $N\geq1$, $(\boldsymbol{x}_{1},\ldots,\boldsymbol{x}_{N})\in\mathbb{R}^{d\times N}$
and real coefficients $(z_{1},\ldots,z_{N})\in\mathbb{R}^{N}$,
\begin{equation}
\sum_{i=1}^{N}\sum_{j=1}^{N}z_{i}z_{j}K(\boldsymbol{x}_{i}-\boldsymbol{x}_{j})\geq 0.\label{eq:positive_definite}
\end{equation}
According to Bochner's theorem \citep{bochner1933monotone,bochner1959lectures},
a continuous, stationary kernel $K\colon\mathbb{R}^{d}\to\mathbb{R}$
is positive definite if and only if there exists a finite measure
$\mu$ such that
\[
K(\boldsymbol{u})=\intop_{\mathbb{R}^{d}}\exp(i\boldsymbol{x}^{\top}\boldsymbol{u})d\mu(\boldsymbol{x}),\ \ \boldsymbol{u}\in\mathbb{R}^{d}.
\]
In other words, $K$ is proportional to a characteristic
function. Suppose that $\mu$ is absolutely continuous with respect
to the Lebesgue measure \citep[Theorem~1.8.16]{sasvari2013characteristic}. Then, there exists a (multivariate) density
$f$ such that
\begin{equation}
K(\boldsymbol{u})=K(\boldsymbol{0})\intop_{\mathbb{R}^{d}}\exp(i\boldsymbol{x}^{\top}\boldsymbol{u})f(\boldsymbol{x})d\boldsymbol{x},\ \ \boldsymbol{u}\in\mathbb{R}^{d}.\label{eq:multivariate_fourier_1}
\end{equation}
where $K(\boldsymbol{0})>0$. In other words, $K/K(\boldsymbol{0})$
is the multivariate Fourier transform of $f$. According
to Bochner's theorem, $f$ is nonnegative if and only
if $K$ is positive definite. This means that $f$
is a probability density function, known as the \textit{spectral density}
of $K$, if and only if $K$ is positive
definite. To further simplify equation~\eqref{eq:multivariate_fourier_1}, we
additionally assume that $K$ is symmetric.
\begin{assumption}
\label{assu:symmetric}We assume that the kernel function $K$
is symmetric: $K(\boldsymbol{u})=K(-\boldsymbol{u})$
for all $\boldsymbol{u}\in\mathbb{R}^{d}$.
\end{assumption}

Under Assumption \ref{assu:symmetric}, the function $f$
is symmetric as well and the imaginary part in equation~\eqref{eq:multivariate_fourier_1}
disappears, meaning that equation~\eqref{eq:multivariate_fourier_1}
can be further explicited into the following probabilistic representation:
\begin{equation}
K(\boldsymbol{u})=K(\boldsymbol{0})\intop_{\mathbb{R}^{d}}\cos(\boldsymbol{x}^{\top}\boldsymbol{u})f(\boldsymbol{x})d\boldsymbol{x}=K(\boldsymbol{0})\,\mathbb{E}\!\left[\cos(\bm{\eta}^{\top}\boldsymbol{u})\right],\ \ \boldsymbol{u}\in\mathbb{R}^{d}\label{eq:Ecos_eta_u}
\end{equation}
where $\bm{\eta}=(\eta_{1},\ldots,\eta_{d})$ is a continuous random
vector with density $f$, where $f$ is the Fourier dual of $K$:
\begin{align}
f(\boldsymbol{x}) & =\frac{1}{K(\boldsymbol{0})(2\pi)^{d}}\int_{\mathbb{R}^{d}}\cos(\boldsymbol{x}^{\top}\boldsymbol{u})K(\boldsymbol{u})d\boldsymbol{u},\ \ \boldsymbol{x}\in\mathbb{R}^{d}.\label{eq:f_wrt_K}
\end{align}
The random vector $\bm{\eta}$
is known as \textit{random projection}. To sum up, under the symmetry assumption \ref{assu:symmetric}, the
probabilistic representation~\eqref{eq:Ecos_eta_u} holds if and only
if $K$ is positive definite. In such a case, the multivariate
kernel $K$ can be approximated by Monte Carlo simulations:
\begin{equation}
K(\boldsymbol{u})=K(\boldsymbol{0})\,\mathbb{E}\!\left[\cos(\bm{\eta}^{\top}\boldsymbol{u})\right]\simeq\frac{K(\boldsymbol{0})}{M}\sum_{m=1}^{M}\cos(\bm{\eta}_{m}^{\top}\boldsymbol{u})\ ,\ \boldsymbol{u}\in\mathbb{R}^{d}\label{eq:random_fourier_features}
\end{equation}
where $\bm{\eta}_{1}$, $\ldots$, $\bm{\eta}_{M}$ are $M$ i.i.d.
vector simulations from the distribution of $\bm{\eta}$. This Monte
Carlo approach~\eqref{eq:random_fourier_features} is known as \textit{random Fourier features} \citep{rahimi2007random}, and is a popular way to speed up the computation of kernel-based estimators in machine learning applications. We refer the reader to \citep{langrene2025mixture} for further examples and details on the implementation of random Fourier features.

\subsection{Signed random Fourier features (SRFF)\label{subsec:srff}}

One key limitation of the approach~\eqref{eq:Ecos_eta_u}-\eqref{eq:random_fourier_features}
is that it does not hold if the kernel function is not positive definite.
In this subsection, we explicitly generalize equation~\eqref{eq:random_fourier_features}
to the case where the kernel $K$ is not positive definite,
using the signed Monte Carlo theory developed in Section~\ref{sec:signed_monte_carlo}.
\begin{thm}
\label{thm:SRFF}{[}Signed random Fourier features{]} 
Let $K\colon\mathbb{R}^{d}\to\mathbb{R}$
be a continuous, symmetric kernel function such that $K(\boldsymbol{0})\neq0$,
and such that the inverse Fourier transform $f$ of $K/K(\boldsymbol{0})$ (equation
\eqref{eq:f_wrt_K}) is absolutely integrable. Then, the following
holds:
\begin{equation}
K(\boldsymbol{u})=K(\boldsymbol{0})\frac{\mathbb{E}\left[\mathrm{sign}(f(\bm{\eta}))\cos(\bm{\eta}^{\top}\boldsymbol{u})\right]}{\mathbb{E}\left[\mathrm{sign}(f(\bm{\eta}))\right]}\ ,\ \boldsymbol{u}\in\mathbb{R}^{d}\label{eq:SRFF}
\end{equation}
where the random projection vector $\bm{\eta}=(\eta_{1},\ldots,\eta_{d})$
has probability density function $\overset{_{\,\frown}}{f}=|f|/\left\Vert f\right\Vert _{1}$.
\end{thm}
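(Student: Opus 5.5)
The plan is to reduce the statement to Lemma~\ref{lem:signed_MC}, applied with the signed density $f$ (the inverse Fourier transform of $K/K(\boldsymbol{0})$ from equation~\eqref{eq:f_wrt_K}) and the test function $g(\boldsymbol{x})=\cos(\boldsymbol{x}^{\top}\boldsymbol{u})$ for a fixed $\boldsymbol{u}\in\mathbb{R}^{d}$. The argument has three steps.

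\textbf{Step 1: recover $K$ from $f$ pointwise.} Since $f$ is assumed absolutely integrable, its Fourier transform $\int_{\mathbb{R}^{d}}\exp(i\boldsymbol{x}^{\top}\boldsymbol{u})f(\boldsymbol{x})d\boldsymbol{x}$ is a continuous function of $\boldsymbol{u}$. By Fourier inversion, this function coincides with $K/K(\boldsymbol{0})$ almost everywhere. Both functions are continuous ($K$ by hypothesis), so the equality holds at every $\boldsymbol{u}$. Symmetry of $K$ (Assumption~\ref{assu:symmetric}) makes $f$ even, so the imaginary part vanishes and we obtain the cosine representation
\[
K(\boldsymbol{u})=K(\boldsymbol{0})\int_{\mathbb{R}^{d}}\cos(\boldsymbol{x}^{\top}\boldsymbol{u})f(\boldsymbol{x})d\boldsymbol{x}\quad\text{for all }\boldsymbol{u}\in\mathbb{R}^{d}.
\]
This is equation~\eqref{eq:Ecos_eta_u} without the probabilistic interpretation, which is no longer available because $f$ may take negative values.

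\textbf{Step 2: verify the hypotheses of Lemma~\ref{lem:signed_MC}.} Setting $\boldsymbol{u}=\boldsymbol{0}$ in the Step~1 identity gives $\int_{\mathbb{R}^{d}}f(\boldsymbol{x})d\boldsymbol{x}=K(\boldsymbol{0})/K(\boldsymbol{0})=1\neq0$, so the nonvanishing condition holds. The function $g$ is continuous, hence measurable, and $|fg|\le|f|$ gives absolute integrability of $fg$.

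\textbf{Step 3: apply the lemma.} Lemma~\ref{lem:signed_MC} yields
\[
\int_{\mathbb{R}^{d}}f(\boldsymbol{x})\cos(\boldsymbol{x}^{\top}\boldsymbol{u})d\boldsymbol{x}=1\cdot\frac{\mathbb{E}\left[\mathrm{sign}(f(\bm{\eta}))\cos(\bm{\eta}^{\top}\boldsymbol{u})\right]}{\mathbb{E}\left[\mathrm{sign}(f(\bm{\eta}))\right]},
\]
with $\bm{\eta}$ distributed according to $\overset{_{\,\frown}}{f}$. Multiplying by $K(\boldsymbol{0})$ and using Step~1 gives equation~\eqref{eq:SRFF}. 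The denominator is nonzero because $\mathbb{E}[\mathrm{sign}(f(\bm{\eta}))]=1/\left\Vert f\right\Vert_{1}>0$, by equation~\eqref{eq:intf}.

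The main obstacle is Step~1. The theorem only assumes that $K$ is continuous and that its inverse transform is in $L^{1}$; it does not assume $K\in L^{1}$, so the definition in equation~\eqref{eq:f_wrt_K} must be read either as an $L^{1}$ integral or in the distributional sense. The plan is to invoke the $L^{1}$ inversion theorem in its tempered-distribution form: if $\hat{K}$ is a function in $L^{1}$, then $K$ agrees almost everywhere with the continuous function $\int_{\mathbb{R}^{d}}\exp(i\boldsymbol{x}^{\top}\boldsymbol{u})f(\boldsymbol{x})d\boldsymbol{x}$. Continuity of both sides then upgrades this to equality everywhere, which is exactly where the continuity hypothesis on $K$ is used.
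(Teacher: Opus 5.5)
Your proposal is correct and follows the paper's proof: evaluate the cosine representation at $\boldsymbol{u}=\boldsymbol{0}$ to get $\int_{\mathbb{R}^{d}}f(\boldsymbol{x})d\boldsymbol{x}=1\neq0$, then apply Lemma~\ref{lem:signed_MC} with $g(\boldsymbol{x})=\cos(\boldsymbol{x}^{\top}\boldsymbol{u})$. The only addition is your Step~1, which spells out the Fourier-inversion argument (continuity upgrading almost-everywhere equality to equality everywhere) that the paper takes for granted when it reuses equation~\eqref{eq:Ecos_eta_u} outside the positive definite setting.
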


\begin{proof}
Using equation~\eqref{eq:Ecos_eta_u} with $\boldsymbol{u}=\boldsymbol{0}$ and the condition $K(\boldsymbol{0})\neq0$,
we obtain the following
\[
\intop_{\mathbb{R}^{d}}f(\boldsymbol{x})d\boldsymbol{x}=1\neq0\ ,
\]
which makes Lemma \ref{lem:signed_MC} applicable. Then, using equation~\eqref{eq:Ecos_eta_u} and Lemma~\ref{lem:signed_MC} with $g(\boldsymbol{x})=\cos(\boldsymbol{x}^{\top}\boldsymbol{u})$ for fixed $\boldsymbol{u}\in \mathbb{R}^d$ yields
\begin{align*}
K(\boldsymbol{u}) & =K(\boldsymbol{0})\intop_{\mathbb{R}^{d}}\cos(\boldsymbol{x}^{\top}\boldsymbol{u})f(\boldsymbol{x})d\boldsymbol{x}=K(\boldsymbol{0})\frac{\mathbb{E}\left[\mathrm{sign}(f(\bm{\eta}))\cos(\bm{\eta}^{\top}\boldsymbol{u})\right]}{\mathbb{E}\left[\mathrm{sign}(f(\bm{\eta}))\right]}
\end{align*}
where the random vector $\bm{\eta}=(\eta_{1},\ldots,\eta_{d})$ has density $\overset{_{\,\frown}}{f}=|f|/\left\Vert f\right\Vert _{1}$, which is well defined since $f$ is assumed to be absolutely integrable.
\end{proof}
Theorem \ref{thm:SRFF} provides a probabilistic representation of
continuous, symmetric kernel functions $K$ which does not require $K$ to be positive definite. Indeed, Theorem
\ref{thm:SRFF} does not require the inverse Fourier transform of
$K$ to be nonnegative, it only requires it to be absolutely
integrable \citep{berry1931necessary}. As such, it can be viewed as a generalization of the random
Fourier features approach \citep{rahimi2007random}. We call this extension \emph{signed random Fourier features}.

Theorem \ref{thm:SRFF} suggests the following Monte Carlo estimator
of $K(\boldsymbol{u})$ for any $\boldsymbol{u}\in\mathbb{R}^{d}$:
\begin{equation}
K_{M}(\boldsymbol{u}):=K(\boldsymbol{0})\frac{\sum_{m=1}^{M}\mathrm{sign}(f(\bm{\eta}_{m}))\cos(\bm{\eta}_{m}^{\top}\bm{u})}{\sum_{m=1}^{M}\mathrm{sign}(f(\bm{\eta}_{m}))}\label{eq:SRFF_MC}
\end{equation}
where $\bm{\eta}_{1}=(\eta_{1,1},\ldots\eta_{d,1}),\ldots,\bm{\eta}_{M}=(\eta_{1,M},\ldots\eta_{d,M})$
are $M$ i.i.d. simulations from the random vector $\bm{\eta}=(\eta_{1},\ldots,\eta_{d})$
with density $\overset{_{\,\frown}}{f}$.
This corresponds to the complex feature mapping
\begin{equation}
K_{M}(\boldsymbol{x}_{i}-\boldsymbol{x}_{j})=\varphi_{M}(\boldsymbol{x}_{i})^{\top}\varphi_{M}(\boldsymbol{x}_{j})\label{eq:kernel_dot_product}
\end{equation}
for all $\boldsymbol{x}_{i}\in\mathbb{R}^{d}$, $\boldsymbol{x}_{j}\in\mathbb{R}^{d}$,
where, for any $\boldsymbol{u}\in\mathbb{R}^{d}$, 
\begin{equation}
\varphi_{M}(\boldsymbol{u}):=\frac{\sqrt{K(\boldsymbol{0})}}{\sqrt{\sum_{m=1}^{M}\mathrm{sign}(f(\bm{\eta}_{m}))}}\left[\begin{array}{c}
\sqrt{\mathrm{sign}(f(\bm{\eta}_{1}))}\cos(\bm{\eta}_{1}^{\top}\boldsymbol{u})\\
\vdots\\
\sqrt{\mathrm{sign}(f(\bm{\eta}_{M}))}\cos(\bm{\eta}_{M}^{\top}\boldsymbol{u})\\
\sqrt{\mathrm{sign}(f(\bm{\eta}_{1}))}\sin(\bm{\eta}_{1}^{\top}\boldsymbol{u})\\
\vdots\\
\sqrt{\mathrm{sign}(f(\bm{\eta}_{M}))}\sin(\bm{\eta}_{M}^{\top}\boldsymbol{u})
\end{array}\right]\in\mathbb{C}^{2M}\label{eq:feature_mapping}
\end{equation}
In practice, a convenient way to simulate the random projection $\bm{\eta}$
is the acceptance-rejection method \citep{vonneumann1951random}, as it avoids the computation of the scaling constant $\left\Vert f\right\Vert _{1}$ which appears in the density of $\bm{\eta}$ and is usually not analytical and hard to estimate (see Subsection~\ref{subsec:rejection_sampling}).

\subsection{Special multivariate kernels\label{subsec:special_kernels}}

The signed random Fourier feature mapping \eqref{eq:feature_mapping}
can be readily applied to decompose multivariate indefinite kernels.
That being said, its implementation can be further simplified in the
cases of the most popular multivariate kernel constructions in density
estimation, namely product kernels and isotropic kernels \citep{hardle2004nonparametric}.

\subsubsection{Product kernels\label{subsec:product_kernels}}

Product kernels, also known as multiplicative kernels, are defined as
\begin{equation}
K(\boldsymbol{u})=\prod^{d}_{\ell=1}k_{\ell}(u_{\ell})\label{eq:product_kernel}
\end{equation}
where $\boldsymbol{u}=(u_{1},\ldots,u_{d})\in\mathbb{R}^{d}$, and
$k_{\ell}$, $\ell=1,\ldots,d$, are univariate kernel functions,
usually, but not necessarily, chosen equal. For every dimension $\ell=1,\ldots,d$,
let $f_{\ell}$ be the inverse Fourier transform of $k_{\ell}/k_{\ell}(0)$.

When the product kernel \eqref{eq:product_kernel} is positive definite,
all the univariate kernels $k_{\ell}$, $\ell=1,\ldots,d$, are positive
definite as well, with spectral density $f_{\ell}$. Then, using \citep[Theorem~1.3.10]{sasvari2013characteristic},
equation \eqref{eq:product_kernel} holds if and only if $K$ is the
characteristic function of the random vector $\bm{\eta}=(\eta_{1},\ldots,\eta_{d})$
where the $\eta_{\ell}$ are independent random variables with density
$f_{\ell}$. In particular, the inverse Fourier transform $f$ of
$K/K(\boldsymbol{0})$ satisfies $f(\boldsymbol{x})=\prod^{d}_{\ell=1}f_{\ell}(x_{\ell})$,
for all $\boldsymbol{x}=(x_{1},\ldots,x_{d})\in\mathbb{R}^{d}$.

Consider now the indefinite setting. When the product kernel \eqref{eq:product_kernel} is not positive
definite, some of the functions $f_{\ell}$ are not densities anymore, and may take negative values. Suppose that each $f_{\ell}$ is nevertheless absolutely
integrable. Then, the equation $f(\boldsymbol{x})=\prod^{d}_{\ell=1}f_{\ell}(x_{\ell})$
still holds, and the product kernel \eqref{eq:product_kernel} admits
the signed random Fourier features representation \eqref{eq:SRFF}
with the random projection vector $\bm{\eta}=(\eta_{1},\ldots,\eta_{d})$
defined such that the random variables $\eta_{\ell}$ are independent
with density $\left|f_{\ell}\right|/\left\Vert f_{\ell}\right\Vert _{1}$
respectively.

\subsubsection{Isotropic kernels\label{subsec:isotropic_kernels}}

Isotropic kernels, also known as radially symmetric kernels, are defined
as
\begin{equation}
K(\boldsymbol{u})=k(\left\Vert \boldsymbol{u}\right\Vert )\label{eq:isotropic_kernel}
\end{equation}
where $\boldsymbol{u}\in\mathbb{R}^{d}$, and $k\colon \mathbb{R}^+\to \mathbb{R}$ is a univariate kernel function.

When the isotropic kernel \eqref{eq:isotropic_kernel} is positive
definite, its spectral density $f$ exists and is also isotropic.
Consequently, the spectral distribution $f$ admits an explicit representation
as a scale mixture of the random distribution on the $d$-dimensional
unit sphere, as proved in the following proposition.
\begin{prop}
\label{prop:scale_mixture}The $d$-dimensional random vector $\boldsymbol{X}$
has a radial (a.k.a. isotropic) density $f\colon\mathbb{R}^{d}\to\mathbb{R}$,
where $f(\boldsymbol{x})=\mathfrak{f}(\left\Vert \boldsymbol{x}\right\Vert )$
with $\mathfrak{f}:\mathbb{R}^{+}\to\mathbb{R}^{+}$, if and only if
there exists a random vector $\boldsymbol{U}$ uniformly
distributed on the $d$-dimensional unit sphere, and a nonnegative random
variable $R$, independent of $\boldsymbol{U}$, whose density $f_{\:\!\!R}:\mathbb{R}^{+}\to\mathbb{R}^{+}$
is explicitly given by 
\begin{equation}
f_{\:\!\!R}(r)=\frac{2\pi^{\frac{d}{2}}}{\Gamma\!\left(\frac{d}{2}\right)}r^{d-1}\mathfrak{f}(r)\mathbbm{1}_{\{r\geq0\}}\ ,\ \forall r\geq0,\label{eq:radius_density}
\end{equation}
such that $\boldsymbol{X}\overset{d}{=}R\boldsymbol{U}$.
\end{prop}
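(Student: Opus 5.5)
We prove the two directions separately, working throughout in polar coordinates $\boldsymbol{x}=r\boldsymbol{\theta}$ with $r>0$ and $\boldsymbol{\theta}\in S^{d-1}$. The Lebesgue measure then decomposes as $d\boldsymbol{x}=r^{d-1}dr\,\sigma(d\boldsymbol{\theta})$, where $\sigma$ is the surface measure on the unit sphere. Its total mass is $|S^{d-1}|=2\pi^{d/2}/\Gamma(d/2)$, which is exactly the constant appearing in \eqref{eq:radius_density}.

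\emph{Necessity.} Suppose $\boldsymbol{X}$ has density $f(\boldsymbol{x})=\mathfrak{f}(\|\boldsymbol{x}\|)$. Define $R:=\|\boldsymbol{X}\|$ and, on the event $\{\boldsymbol{X}\neq\boldsymbol{0}\}$ (which has probability one), $\boldsymbol{U}:=\boldsymbol{X}/\|\boldsymbol{X}\|$. Trivially $\boldsymbol{X}=R\boldsymbol{U}$. Now take bounded measurable functions $h\colon\mathbb{R}^+\to\mathbb{R}$ and $\psi\colon S^{d-1}\to\mathbb{R}$. The polar change of variables and Fubini give
\[
\mathbb{E}[h(R)\psi(\boldsymbol{U})]=\int_0^\infty h(r)\,r^{d-1}\mathfrak{f}(r)\,dr\int_{S^{d-1}}\psi(\boldsymbol{\theta})\,\sigma(d\boldsymbol{\theta}).
\]
Write $\int_{S^{d-1}}\psi\,d\sigma=|S^{d-1}|\cdot\bigl(\int\psi\,d\sigma/|S^{d-1}|\bigr)$. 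The right-hand side then factorizes as $\mathbb{E}[h(R)]\,\mathbb{E}[\psi(\boldsymbol{U}')]$, where $R$ has density $f_{\:\!\!R}$ from \eqref{eq:radius_density} and $\boldsymbol{U}'$ is uniform on the sphere. Choosing $\psi\equiv1$ identifies the law of $R$; choosing $h\equiv1$ identifies the law of $\boldsymbol{U}$ as uniform. The full product identity then gives independence of $R$ and $\boldsymbol{U}$.

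\emph{Sufficiency.} Conversely, suppose $\boldsymbol{X}\overset{d}{=}R\boldsymbol{U}$ with $\boldsymbol{U}$ uniform, $R$ independent of $\boldsymbol{U}$, and $R$ having density $f_{\:\!\!R}$. For bounded measurable $g$ we compute
\[
\mathbb{E}[g(R\boldsymbol{U})]=\int_0^\infty\int_{S^{d-1}}g(r\boldsymbol{\theta})\,\frac{\sigma(d\boldsymbol{\theta})}{|S^{d-1}|}\,|S^{d-1}|\,r^{d-1}\mathfrak{f}(r)\,dr=\int_{\mathbb{R}^d}g(\boldsymbol{x})\,\mathfrak{f}(\|\boldsymbol{x}\|)\,d\boldsymbol{x}.
\]
The last equality is the polar formula read backwards. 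This shows that $\boldsymbol{X}$ has the radial density $f$.

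\emph{Main obstacle.} The substantive ingredient is the polar-coordinates integration formula $\int_{\mathbb{R}^d}g\,d\boldsymbol{x}=\int_0^\infty\int_{S^{d-1}}g(r\boldsymbol{\theta})\,r^{d-1}\,\sigma(d\boldsymbol{\theta})\,dr$, together with the normalizing constant $|S^{d-1}|=2\pi^{d/2}/\Gamma(d/2)$. We would cite the formula as standard. The constant follows by applying the formula to $g(\boldsymbol{x})=e^{-\|\boldsymbol{x}\|^2}$: the left side is $\pi^{d/2}$ and the radial integral is $\Gamma(d/2)/2$. Everything else is Fubini and bookkeeping.
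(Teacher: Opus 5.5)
Your proof is correct, but it takes a different route from the paper's.

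The paper obtains the structural part of the statement by citing Schoenberg's theorem. That part is the fact that an isotropic $\boldsymbol{X}$ can be written as $R\boldsymbol{U}$ with $\boldsymbol{U}$ uniform on the sphere and independent of $R$. The paper then uses the polar change of variables only for two things. First, it checks that $h(r)=\frac{2\pi^{d/2}}{\Gamma(d/2)}r^{d-1}\mathfrak{f}(r)$ integrates to one. Second, it identifies the law of $R$ by comparing the distribution functions of $\left\Vert \boldsymbol{X}\right\Vert$ and $R$.

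You instead construct $R=\left\Vert \boldsymbol{X}\right\Vert$ and $\boldsymbol{U}=\boldsymbol{X}/\left\Vert \boldsymbol{X}\right\Vert$ explicitly and test against products $h(R)\psi(\boldsymbol{U})$. One application of the polar formula and Fubini then gives the law of $R$, the uniformity of $\boldsymbol{U}$, and their independence at once, with no appeal to Schoenberg. This has three advantages:
\begin{itemize}
\item Your argument is self-contained. Its only external input is the polar integration formula, and you derive the constant $|S^{d-1}|$ yourself from the Gaussian integral.
\item It gives $\boldsymbol{X}=R\boldsymbol{U}$ almost surely, not just in distribution.
\item It proves the converse direction explicitly, namely that $R\boldsymbol{U}$ has density $\mathfrak{f}(\left\Vert \boldsymbol{x}\right\Vert)$. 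The paper only writes out the direction from the density to the representation and leaves the converse implicit in the citation.
\end{itemize}

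In exchange, the paper's route is shorter and its structural step is more general. Schoenberg's scale-mixture representation holds for every isotropic distribution, with or without a density, so the density hypothesis is used only to compute $f_{\:\!\!R}$.

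One cosmetic point: assign $\boldsymbol{U}$ some fixed value on the null event $\{\boldsymbol{X}=\boldsymbol{0}\}$, so that it is defined everywhere as a random vector.
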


\begin{proof}
The equivalence between the isotropy of the distribution of $\boldsymbol{X}$
and its scale mixture representation $\boldsymbol{X}\overset{d}{=}R\boldsymbol{U}$
is a consequence of Schoenberg's theorem \citep[Corollary~3.8.3]{sasvari2013characteristic}.
Next, since $f$ is a density, $\int_{\mathbb{R}^{d}}f(\boldsymbol{x})d\boldsymbol{x}=1$.
Using a polar change of variable \citep[Corollary 2.51 and Proposition 2.54 page 79]{folland1999real},
\[
\int_{\mathbb{R}^{d}}f(\boldsymbol{x})d\boldsymbol{x}=1=\int_{\mathbb{R}^{d}}\mathfrak{f}(\left\Vert \boldsymbol{x}\right\Vert )d\boldsymbol{x}=\frac{2\pi^{\frac{d}{2}}}{\Gamma\!\left(\frac{d}{2}\right)}\int^{\infty}_{0}r^{d-1}\mathfrak{f}(r)dr
\]
This ensures that the nonnegative function $h(r):=\frac{2\pi^{\frac{d}{2}}}{\Gamma(\frac{d}{2})}r^{d-1}\mathfrak{f}(r)\mathbbm{1}_{\{r\geq0\}}$
is integrable, and integrates to one. In other words, $h$ is a probability
density function. Then, for any $t\geq0$, using the same polar change
of variable, 
\[
\mathbb{P}\left(\left\Vert \boldsymbol{X}\right\Vert \leq t\right)=\int_{\mathbb{R}^{d}}\mathbbm{1}_{\{\left\Vert \boldsymbol{x}\right\Vert \leq t\}}\mathfrak{f}(\left\Vert \boldsymbol{x}\right\Vert )d\boldsymbol{x}=\frac{2\pi^{\frac{d}{2}}}{\Gamma\!\left(\frac{d}{2}\right)}\int^{\infty}_{0}r^{d-1}\mathfrak{f}(r)\mathbbm{1}_{\{r\leq t\}}dr
\]
On the other hand, $\left\Vert \boldsymbol{X}\right\Vert =R\left\Vert \boldsymbol{U}\right\Vert =R$
and
\[
\mathbb{P}\left(\left\Vert \boldsymbol{X}\right\Vert \leq t\right)=\mathbb{P}\left(R\leq t\right)=\int^{\infty}_{0}f_{\:\!\!R}(r)\mathbbm{1}_{\{r\leq t\}}dr.
\]
Since cumulative distribution functions fully characterize probability
distributions, this proves that $f_{\:\!\!R}(r)=h(r)=\frac{2\pi^{\frac{d}{2}}}{\Gamma(\frac{d}{2})}r^{d-1}\mathfrak{f}(r)\mathbbm{1}_{\{r\geq0\}}$
for all $r\geq0$. In other words, equation \eqref{eq:radius_density}
holds and defines a probability density function.
\end{proof}

\begin{rem}
Let $\Phi_{d}$ denote the set of kernels that are positive definite
in $\mathbb{R}^{d}$, and let $\Phi_{\infty}$ denote the set of kernels that
are positive definite in $\mathbb{R}^{d}$ for all $d\geq1$. Recall
that $\Phi_{\infty}\subset\cdots\subset\Phi_{2}\subset\Phi_{1}$ \citep[page~173]{sasvari2013characteristic}.
The spectral scale mixture described in Proposition~\ref{prop:scale_mixture}
is valid for every kernel in $\Phi_{d}$. It is therefore more general
than the Gaussian spectral scale mixture described in \citep{langrene2025mixture},
which only applies to kernels in $\Phi_{\infty}$. Examples of kernels
in $\Phi_{1}\setminus\Phi_{\infty}$ include the triangular kernel
$K(u)=(1-\left|u\right|)\mathbbm{1}_{\{\left|u\right|\leq1\}}$, whose
spectral density is the Fej\'er-de la Vall\'ee-Poussin density $f(x)=\frac{1-\cos(x)}{\pi x^{2}}=\frac{1}{2\pi}\frac{\sin^{2}(x/2)}{(x/2)^{2}}$
\citep{davis1975mean}\citep[page 207]{hormann2004automatic}\citep[page 100]{devroye2006nonuniform},
and the Silverman kernel $K(u)=\frac{1}{2}\exp\left(-\frac{\left|u\right|}{\sqrt{2}}\right)\sin\left(\frac{\left|u\right|}{\sqrt{2}}+\frac{\pi}{4}\right)$ \citep{silverman1984spline,tsybakov2009nonparametric}, whose spectral density is the Laha density $f(x)=\frac{\sqrt{2}}{\pi(1+x^{4})}$
\citep{laha1958example,crooks2019field}, a.k.a. Butterworth filter
\citep{hardle1994kernel}. However, unlike the construction described
in \citep{langrene2025mixture}, equation~\eqref{eq:radius_density}
requires an analytical expression for the spectral density $f(.)=\mathfrak{f}(\left\Vert .\right\Vert )$
of the kernel under consideration.
\end{rem}

When the isotropic kernel \eqref{eq:isotropic_kernel} is not positive
definite, equation \eqref{eq:radius_density} does not define a density
anymore, and $f_{\:\!\!R}$ may take negative values. Suppose that $f_{\:\!\!R}$
is nevertheless absolutely integrable. Then, the isotropic kernel
\eqref{eq:isotropic_kernel} admits the signed random Fourier features
representation \eqref{eq:SRFF} with the random projection vector
$\bm{\eta}=R\boldsymbol{U}$, where $\boldsymbol{U}$ is a random
vector uniformly distributed on the $d$-dimensional unit sphere, and
$R$ is a random variable, independent of $\boldsymbol{U}$, with
density $\left|f_{\:\!\!R}\right|/\left\Vert f_{\:\!\!R}\right\Vert_1 $,
where $f_{\:\!\!R}$ is defined by equation \eqref{eq:radius_density}.

\section{Fast kernel density estimation\label{sec:fast_kde}}

This section describes how to use the signed random Fourier features methodology (Section~\ref{sec:signed_rff}) to improve the computational efficiency of kernel density estimation. 

\subsection{Fast KDE computation\label{subsec:fast_kde_computation}}

Let $\mathcal{D}=\left\{ (\boldsymbol{x}_{n},y_{n})\in\mathbb{R}^{d}\times\mathbb{R}\ensuremath{,}n=1,\ldots,N\right\} $
be a dataset and $K:\mathbb{R}^{d}\to\mathbb{R}$
be a stationary, symmetric kernel function. Define the column vector
$\boldsymbol{y}=\left[y_{n}\right]_{1\leq n\leq N}\in\mathbb{R}^{N}$
and the kernel matrix $\mathbf{K}:=\left[K(\boldsymbol{x}_{i}-\boldsymbol{x}_{j})\right]_{1\leq i,j\leq N}\in\mathbb{R}^{N\times N}$.
We are interested in computing the following kernel matrix vector
multiplication (MVM):
\begin{equation}
\mathbf{K}\boldsymbol{y}=\left[\begin{array}{c}
\sum_{n=1}^{N}y_{n}K(\boldsymbol{x}_{1}-\boldsymbol{x}_{n})\\
\sum_{n=1}^{N}y_{n}K(\boldsymbol{x}_{2}-\boldsymbol{x}_{n})\\
\vdots\\
\sum_{n=1}^{N}y_{n}K(\boldsymbol{x}_{N}-\boldsymbol{x}_{n})
\end{array}\right]\label{eq:kernel_mvm}
\end{equation}

Equation~\eqref{eq:kernel_mvm} contains the kernel density estimator $\mathbf{K}\boldsymbol{1}/N$, where $\boldsymbol{1}:=[1,\ldots,1]^{\top}\in\mathbb{R}^{N}$, as a particular case, and the Nadaraya-Watson kernel regression estimator can be written as $\mathbf{K}\boldsymbol{y}/\mathbf{K}\boldsymbol{1}$. Both estimators are evaluated at each data point $\boldsymbol{x}_{1},\ldots,\boldsymbol{x}_{N}$.

A direct computation of the kernel MVM~\eqref{eq:kernel_mvm} requires
$\mathcal{O}(N^{2})$ operations. However, by using the signed random
Fourier features approximation~\eqref{eq:SRFF_MC} of the kernel function
$K$ with $M$ random projections $\bm{\eta}_{1}$, $\ldots$,
$\bm{\eta}_{M}$, one obtains for any $\boldsymbol{z}\in\mathbb{R}^{d}$,

\begin{align*}
 & \sum_{n=1}^{N}y_{n}K(\boldsymbol{z}-\boldsymbol{x}_{n})\simeq K(\boldsymbol{0})\sum_{n=1}^{N}y_{n}\frac{\sum_{m=1}^{M}\mathrm{sign}(f(\bm{\eta}_{m}))\times\cos(\bm{\eta}_{m}^{\top}(\boldsymbol{z}-\boldsymbol{x}_{n}))}{\sum_{m=1}^{M}\mathrm{sign}(f(\bm{\eta}_{m}))}\\
 & =K(\boldsymbol{0})\frac{\sum_{n=1}^{N}y_{n}\sum_{m=1}^{M}\mathrm{sign}(f(\bm{\eta}_{m}))\times\left(\cos(\bm{\eta}_{m}^{\top}\boldsymbol{z})\cos(\bm{\eta}_{m}^{\top}\boldsymbol{x}_{n})+\sin(\bm{\eta}_{m}^{\top}\boldsymbol{z})\sin(\bm{\eta}_{m}^{\top}\boldsymbol{x}_{n})\right)}{\sum_{m=1}^{M}\mathrm{sign}(f(\bm{\eta}_{m}))}\\
 & =K(\boldsymbol{0})\left(\frac{\sum_{m=1}^{M}\mathrm{sign}(f(\bm{\eta}_{m}))\cos(\bm{\eta}_{m}^{\top}\boldsymbol{z})\left\{ \sum_{n=1}^{N}y_{n}\cos(\bm{\eta}_{m}^{\top}\boldsymbol{x}_{n})\right\} }{\sum_{m=1}^{M}\mathrm{sign}(f(\bm{\eta}_{m}))}\right.\\
 & \left.+\frac{\sum_{m=1}^{M}\mathrm{sign}(f(\bm{\eta}_{m}))\sin(\bm{\eta}_{m}^{\top}\boldsymbol{z})\left\{ \sum_{n=1}^{N}y_{n}\sin(\bm{\eta}_{m}^{\top}\boldsymbol{x}_{n})\right\} }{\sum_{m=1}^{M}\mathrm{sign}(f(\bm{\eta}_{m}))}\right) .
\end{align*}
This formula can be computed for every $\boldsymbol{z}\in\left\{\boldsymbol{x}_{1},\boldsymbol{x}_{2},\ldots,\boldsymbol{x}_{N}\right\} $
in $\mathcal{O}(MN)$ operations, as described in Algorithm \ref{algo:fast_kernel_mvm}.

\begin{algorithm2e}[H]
\DontPrintSemicolon 
\SetAlgoLined 

\vspace{1mm}

\KwIn{Dataset $\mathcal{D}=\left\{ (\boldsymbol{x}_{n},y_{n})\in\mathbb{R}^{d}\times\mathbb{R}\ensuremath{,}\ n=1,\ldots,N\right\} $,
\newline Stationary, symmetric kernel $K:\mathbb{R}^{d}\to\mathbb{R}$,
\vspace{1mm}\newline $f$, the inverse Fourier transform of $K/K(\boldsymbol{0})$,
\vspace{1mm}\newline Random projections $\bm{\eta}_{m}\in\mathbb{R}^{d}$,
$m=1,\ldots,M$ drawn independently from the density $\left|f\right|/\left\Vert f\right\Vert _{1}$.}

\vspace{1mm}

\For{$m=1,...,M$}{

\vspace{0.5mm}

$\mathrm{\delta_{m}:=sign}(f(\bm{\eta}_{m}))$

\vspace{0.5mm}

\For{$n=1,...,N$}{

\vspace{0.5mm}

$c_{m,n}:=\cos(\bm{\eta}_{m}^{\top}\boldsymbol{x}_{n})$

$s_{m,n}:=\sin(\bm{\eta}_{m}^{\top}\boldsymbol{x}_{n})$

} 

} 

$C_{m}:=\sum^{N}_{n=1}y_{n}c_{m,n}$

\vspace{1mm}

$S_{m}:=\sum^{N}_{n=1}y_{n}s_{m,n}$

\vspace{1mm}

$\Delta:=\sum^{M}_{m=1}\mathrm{sign}(f(\bm{\eta}_{m}))$

\vspace{1mm}

\For{$n=1,...,N$}{

\vspace{0.5mm}

$R_{n}:=\frac{K(\boldsymbol{0})}{\Delta}\sum^{M}_{m=1}\left(\delta_{m}C_{m}c_{m,n}+\delta_{m}S_{m}s_{m,n}\right)$

} 

\KwOut{$R_{n}$, $n=1,2,\ldots,N$}

\vspace{1mm}

\Comment*[l]{$R_{n}=K(\boldsymbol{0})\sum^{N}_{n'=1}y_{n'}\frac{\sum^{M}_{m=1}\mathrm{sign}(f(\bm{\eta}_{m}))\cos(\bm{\eta}_{m}^{\top}(\boldsymbol{x}_{n}-\boldsymbol{x}_{n'}))}{\sum^{M}_{m=1}\mathrm{sign}(f(\bm{\eta}_{m}))}\simeq\sum^{N}_{n'=1}y_{n'}K(\boldsymbol{x}_{n}-\boldsymbol{x}_{n'})$}

\vspace{1mm}

\caption{Fast kernel MVM by signed random Fourier features\label{algo:fast_kernel_mvm}}

\end{algorithm2e}

Observe that Algorithm~\ref{algo:fast_kernel_mvm} does not directly use the complex-valued feature mapping~\eqref{eq:feature_mapping}, and therefore does not require to work with complex numbers.

\begin{rem}
As discussed in \citep{langrene2019fast}, kernel density
estimators with adaptive bandwidths, such as $\frac{1}{Nh(\boldsymbol{z})^{d}}\sum^{N}_{n=1}K(\frac{\boldsymbol{z}-\boldsymbol{x}_{n}}{h(\boldsymbol{z})})$
(balloon bandwidth \citep{terrell1992variable}, where the bandwidths
$h(\boldsymbol{z})>0$ depend on the evaluation point $\boldsymbol{z}\in\mathbb{R}^{d}$),
do not benefit from the decomposition provided by the cosine formula
$\cos\!\left(\frac{\bm{\eta}^{\top}_{m}(\boldsymbol{z}-\boldsymbol{x}_{n})}{h(\boldsymbol{z})}\right)=\cos\!\left(\frac{\bm{\eta}^{\top}_{m}\boldsymbol{z}}{h(\boldsymbol{z})}\right)\cos\!\left(\frac{\bm{\eta}^{\top}_{m}\boldsymbol{x}_{n}}{h(\boldsymbol{z})}\right)+\sin\!\left(\frac{\bm{\eta}^{\top}_{m}\boldsymbol{z}}{h(\boldsymbol{z})}\right)\sin\!\left(\frac{\bm{\eta}^{\top}_{m}\boldsymbol{x}_{n}}{h(\boldsymbol{z})}\right)$,
as the resulting cross terms $\cos\!\left(\frac{\bm{\eta}^{\top}_{m}\boldsymbol{x}_{n}}{h(\boldsymbol{z})}\right)$
and $\sin\!\left(\frac{\bm{\eta}^{\top}_{m}\boldsymbol{x}_{n}}{h(\boldsymbol{z})}\right)$
cannot disentangle the sources $\boldsymbol{x}$ from the targets
$\boldsymbol{z}$, making fast summation approaches such as Algorithm~\ref{algo:fast_kernel_mvm}
inapplicable. Instead, one can mimic the $k$-NN balloon bandwidth
estimator with equal number of data points in each bandwidth support
\citep{loftsgaarden1965nonparametric,terrell1992variable,langrene2019fast},
known to be robust in multivariate settings \citep{terrell1992variable},
by applying a uniformization feature mapping $\boldsymbol{x}_{i}\mapsto\mathrm{rank}(\boldsymbol{x}_{i})/N$
to the input dataset $\mathcal{X}=\left\{ \boldsymbol{x}_{n}\in\mathbb{R}^{d}\ensuremath{,}n=1,\ldots,N\right\} $,
where $\mathrm{rank}(\boldsymbol{x}_{i})$ is the rank of $\boldsymbol{x}_{i}$
in $\mathcal{X}$, and then applying the fast algorithm~\ref{algo:fast_kernel_mvm}
with constant bandwidth to this transformed dataset.
\end{rem}

\subsection{Choice of multivariate kernel function\label{subsec:choice_of_kernel}}

The implementation of the fast KDE algorithm described in the above
Subsection~\ref{subsec:fast_kde_computation} depends on the choice
of kernel function $K$, as spectral sampling requires to know the
Fourier transform of $K$.

Only a few kernel functions commonly used in KDE, such as the multivariate
Gaussian kernel $K(\boldsymbol{u})=\exp\!\left(-\frac{1}{2}\left\Vert \boldsymbol{u}\right\Vert ^{2}\right)$,
are positive definite in $\mathbb{R}^{d}$ for all $d\geq1$. For
such kernels, KDE computations can be performed efficiently using
the classical random Fourier features approach~\eqref{eq:random_fourier_features}.
We refer to \citep{langrene2025mixture} for additional examples of
such kernels, along with their corresponding spectral sampling schemes.

However, the vast majority of kernel functions commonly used in KDE
are either only positive definite for low-dimensional datasets, or
are simply not positive definite. Many of them are particular cases
of the following multivariate isotropic compact kernel, which may
be called the Kuttner-Golubov kernel, or central generalized beta
kernel, defined by
\begin{equation}
K(\boldsymbol{u})=k(\left\Vert \boldsymbol{u}\right\Vert )=(1-\left\Vert \boldsymbol{u}\right\Vert ^{\alpha})^{\beta}\mathbbm{1}_{\{\left\Vert \boldsymbol{u}\right\Vert \leq1\}}\ ,\ \boldsymbol{u}\in\mathbb{R}^{d},\label{eq:kuttner_golubov_kernel}
\end{equation}
where $\alpha>0$ and $\beta>0$. Finding the conditions on $\alpha$
and $\beta$ such that $K$ is positive definite in $\mathbb{R}^{d}$
is the so-called Kuttner-Golubov problem \citep{kuttner1944riesz,golubov1981abel,gneiting2001polya},
and remains an open problem in the general case. The class of compact
kernels \eqref{eq:kuttner_golubov_kernel} encompasses, up to a multiplicative
constant (see Lemma~\ref{lem:kernel_integral}), many kernels of
interest in density estimation and kernel smoothing \citep{silverman1986density,loader1999local,scott2015multivariate,chacon2018multivariate},
including the triangular kernel ($\alpha=1$, $\beta=1$, a.k.a. hat
function \citep{schaback1995radial}), the parabolic kernel ($\alpha=2$,
$\beta=1$,  a.k.a. Epanechnikov kernel \citep{epanechnikov1969nonparametric}), the biweight
kernel ($\alpha=2$, $\beta=2$, a.k.a. quartic, or bisquare kernel \citep{loader1999local}),
the triweight kernel ($\alpha=2$, $\beta=3$), and the tricube kernel
($\alpha=3$, $\beta=3$). These are summarized in Table~\ref{tab:kuttner_golubov}.

\begin{table}[th]
\begin{centering}
\begin{tabular}{lccl}
\hline 
Kernel name & $\alpha$ & $\beta$ & Formula\tabularnewline
\hline 
Triangular & $1$ & $1$ & $K(\boldsymbol{u})=(1-\left\Vert \boldsymbol{u}\right\Vert )\mathbbm{1}_{\{\left\Vert \boldsymbol{u}\right\Vert \leq1\}}$\tabularnewline
Parabolic & $2$ & $1$ & $K(\boldsymbol{u})=(1-\left\Vert \boldsymbol{u}\right\Vert ^{2})\mathbbm{1}_{\{\left\Vert \boldsymbol{u}\right\Vert \leq1\}}$\tabularnewline
Biweight & $2$ & $2$ & $K(\boldsymbol{u})=(1-\left\Vert \boldsymbol{u}\right\Vert ^{2})^{2}\mathbbm{1}_{\{\left\Vert \boldsymbol{u}\right\Vert \leq1\}}$\tabularnewline
Triweight & $2$ & $3$ & $K(\boldsymbol{u})=(1-\left\Vert \boldsymbol{u}\right\Vert ^{2})^{3}\mathbbm{1}_{\{\left\Vert \boldsymbol{u}\right\Vert \leq1\}}$\tabularnewline
Tricube & $3$ & $3$ & $K(\boldsymbol{u})=(1-\left\Vert \boldsymbol{u}\right\Vert ^{3})^{3}\mathbbm{1}_{\{\left\Vert \boldsymbol{u}\right\Vert \leq1\}}$\tabularnewline
\hline 
\end{tabular}
\par\end{centering}
\caption{Examples of kernel functions included in the Kuttner-Golubov class~\eqref{eq:kuttner_golubov_kernel}\label{tab:kuttner_golubov}}
\end{table}

\begin{rem}
Here, the kernel function~\eqref{eq:kuttner_golubov_kernel} is defined
such that $K(\boldsymbol{0})=1$ for simplicity, however smoothing
tasks may require that $\int_{\mathbb{R}^{d}}K(\boldsymbol{u})d\boldsymbol{u}=1$.
This latter scaling can be achieved by dividing equation~\eqref{eq:kuttner_golubov_kernel}
by the constant term computed in Lemma~\ref{lem:kernel_integral}
in Appendix.
\end{rem}

Proposition~\ref{prop:fourier_kuttner_golubov} in Appendix provides
an explicit formula for the Fourier transform of the general Kuttner-Golubov
kernel~\eqref{eq:kuttner_golubov_kernel} in terms of the Fox-Wright
generalized hypergeometric function, as well as more explicit Fourier
transform formulas in the special cases $\alpha=1$ (Askey kernels,
Corollary~\ref{cor:fourier_askey}) and $\alpha=2$ (symmetric beta
kernels, Corollary~\ref{cor:fourier_symmetric_beta}) in terms of
confluent hypergeometric functions. This provides explicit formulas
for the function $f_{\:\!\!R}$ from equation~\eqref{eq:radius_density}.
One can verify that $f_{\:\!\!R}$ is absolutely integrable if and
only if $d\leq2\beta$, in which case the spectral radius density
$\left|f_{\:\!\!R}\right|/\left\Vert f_{\:\!\!R}\right\Vert _{1}$
is well defined, and can be simulated efficiently by acceptance-rejection,
as shown in the next subsection. If $d>2\beta$, then the isotropic
kernel approach (subsection~\ref{subsec:isotropic_kernels}) is not
compatible with the signed random Fourier features methodology~\eqref{eq:SRFF}
(because $\left\Vert f_{\:\!\!R}\right\Vert _{1}=\infty$). In this case, one should either change kernel and pick one such that $\beta\geq\frac{d}{2}$, or use the product kernel approach (subsection~\ref{subsec:product_kernels}), which is compatible with SRFF without any dimensional restriction.

\subsection{Spectral rejection sampling\label{subsec:rejection_sampling}}

In this subsection, we describe how to simulate isotropic spectral densities,
as described in Subsection~\ref{subsec:isotropic_kernels}, in practice.
This step is required to implement the signed random Fourier features
estimator \eqref{eq:SRFF_MC}, as well as the fast kernel density
estimator described in Algorithm~\ref{algo:fast_kernel_mvm}. In
the case of isotropic kernels, the random projection $\boldsymbol{\eta}$
has the distribution of a scale mixture $R\boldsymbol{U}$. The random
vector $\boldsymbol{U}$ is uniformly distributed on the $d$-dimensional
unit sphere, and can easily be simulated as $\boldsymbol{U}=\frac{\boldsymbol{G}}{\left\Vert \boldsymbol{G}\right\Vert }$,
where $\boldsymbol{G}$ is a standard $d$-dimensional Gaussian vector.
The random variable $R$ is independent of $\boldsymbol{U}$, and
has density $\left|f_{\:\!\!R}\right|/\left\Vert f_{\:\!\!R}\right\Vert _{1}$,
where $f_{\:\!\!R}$ is defined by equation \eqref{eq:radius_density}.
Simulating $R$ is the only difficulty in the simulation of the random
projection $\boldsymbol{\eta}$.

From equation~\eqref{eq:radius_density}, the density of the spectral
radius $R$ is proportional to the function $r^{d-1}\left|\mathfrak{f}(r)\right|$,
$r\geq0$, where $f(\boldsymbol{x})=\mathfrak{f}(\left\Vert \boldsymbol{x}\right\Vert )$,
$\boldsymbol{x}\in\mathbb{R}^{d}$ is the inverse Fourier transform of the
kernel $K$ under consideration. Remark that the function $\mathfrak{f}$
is going to depend on $d$, see for example equations~\eqref{eq:fourier_kuttner_golubov}, \eqref{eq:fourier_symmetric_beta}, and \eqref{eq:fourier_askey}.
In order to simulate from such density, the most practical option
is to use acceptance-rejection sampling \citep{vonneumann1951random}.
To do so, it is sufficient to find an integrable, nonnegative envelope
$g\colon\mathbb{R}^{+}\to\mathbb{R}^{+}$ such that
\begin{equation}
r^{d-1}\left|\mathfrak{f}(r)\right|\leq g(r),\ \forall r\geq0,\label{eq:envelope}
\end{equation}
and such that it is easy to simulate from the density $g/\left\Vert g\right\Vert _{1}$.
Then, the acceptance-rejection algorithm works as follows.

\SetKwInput{KwIn}{Input}\SetKwInput{KwOut}{Output}
\begin{algorithm2e}[H]
\DontPrintSemicolon 
\SetAlgoLined 
\LinesNumbered

\vspace{1mm}

\KwIn{Function $g$ satisfying the inequality~\eqref{eq:envelope}}

Let $r$ be a simulation from the density $g/\left\Vert g\right\Vert _{1}$.

Let $u\sim\mathcal{U}(0,1)$ be a simulation from a standard uniform
distribution, independent of~$r$

\uIf{$u\leq\frac{r^{d-1}\left|\mathfrak{f}(r)\right|}{g(r)}$}{

Return $r$

}\Else{

Go back to line 1\textbf{ }

} 

\KwOut{$r$ is a simulation from the spectral radius $R$.}\vspace{1mm}
\caption{Simulating random spectral radius by acceptance-rejection\label{algo:acceptance_rejection}}

\end{algorithm2e}

In the case of the isotropic compact kernels from Table~\ref{tab:kuttner_golubov},
using a uniform proposal distribution as in \citep{liu2021fast} or
a Gaussian proposal distribution as in \citep{he2024random} would
not work, because the tails of these proposal densities decay much
more quickly than the one of the Fourier transform \eqref{eq:fourier_kuttner_golubov},
which decays as an inverse power function. Instead, we propose the
following envelope function $g$:
\begin{equation}
g(r)=\min(c_{L}r^{d-1},c_{M},c_{R}r^{-1-\nu})\mathbbm{1}_{\{r\geq0\}}\label{eq:proposal_envelope}
\end{equation}
for some positive constants, $c_{L}$, $c_{M}$, $c_{R}$, $\nu$.
First, $\nu$ is set to match the decay of the function $r^{d-1}\left|\mathfrak{f}(r)\right|$
for large $r$. Then, the three constants $c_{L}$, $c_{M}$, and
$c_{R}$ are set as follows:
\[
\begin{array}{ccccc}
c_{L}=f(\boldsymbol{0})=\mathfrak{f}(0) & , & c_{M}=\underset{r\geq0}{\max}\left(r^{d-1}\left|\mathfrak{f}(r)\right|\right) & , & c_{R}=\underset{r\geq0}{\max}\left(r^{\nu+d}\left|\mathfrak{f}(r)\right|\right)\end{array}.
\]
The constant $c_{L}$ can be computed from equation~\eqref{eq:f_wrt_K}
(it is the maximum value of the inverse Fourier transform $f$), while the
constants $c_{M}$ and $c_{R}$ can be estimated numerically for fixed
$d$. The density $g/\left\Vert g\right\Vert _{1}$ is a mixture of
a beta, uniform, and Pareto densities with non-overlapping support,
and can easily be simulated using the composition approach, as explained
in Algorithm~\ref{algo:proposal_sampling} below.

\SetKwInput{KwIn}{Inputs}\SetKwInput{KwOut}{Output}
\begin{algorithm2e}[H]
\DontPrintSemicolon 
\SetAlgoLined 
\LinesNumbered

\vspace{1mm}

\KwIn{Constants $c_{L}$, $c_{M}$, $c_{R}$, and $\nu$ from equation~\eqref{eq:proposal_envelope}}

Define $r_{1}=\left(\frac{c_{M}}{c_{L}}\right)^{\frac{1}{d-1}}$ and
$r_{2}=\left(\frac{c_{R}}{c_{M}}\right)^{\frac{1}{1+\nu}}$

Let $u_{1}\sim\mathcal{U}(0,1)$ and $u_{2}\sim\mathcal{U}(0,1)$
be two independent standard uniform simulations

\vspace{1mm}

\uIf{$u_{1}<\frac{c_{L}r^{d}_{1}/d}{c_{L}r^{d}_{1}/d+c_{M}(r_{2}-r_{1})+c_{R}r^{-\nu}_{2}\nu}$}{

\vspace{1mm}

Return $r_{1}u^{\frac{1}{d}}_{2}$ \Comment*[l]{Beta distribution
on $[0,r_{1}]$}

\vspace{1mm}

}\uElseIf{$u_{1}<\frac{c_{L}r^{d}_{1}/d+c_{M}(r_{2}-r_{1})}{c_{L}r^{d}_{1}/d+c_{M}(r_{2}-r_{1})+c_{R}r^{-\nu}_{2}\nu}$}{

\vspace{1mm}

Return $r_{1}+(r_{2}-r_{1})u_{2}$ \Comment*[l]{Uniform distribution
on $[r_{1},r_{2}]$}

}\Else{

Return $r_{2}u^{-\frac{1}{\nu}}_{2}$\textbf{ }\Comment*[l]{Pareto
distribution on $[r_{2},\infty)$}

} 

\KwOut{One simulation from the spectral proposal density $g/\left\Vert g\right\Vert _{1}$.}\vspace{1mm}
\caption{Simulating proposal distribution $g/\left\Vert g\right\Vert _{1}$
from equation~\eqref{eq:proposal_envelope}\label{algo:proposal_sampling}}

\end{algorithm2e}

Figures~\ref{fig:proposal_triangular}-\ref{fig:proposal_parabolic}-\ref{fig:proposal_triweight}
illustrate the proposed envelope function~\eqref{eq:proposal_envelope}
with optimized parameters in the case of the triangular, parabolic,
and triweight kernels respectively, from $d=1$ up to $d=6$ (in the
case of the triweight kernel).
\begin{rem}
In the univariate case $d=1$, which also covers the case of product
kernels~\eqref{eq:product_kernel}, the beta component of the proposal
distribution~\eqref{eq:proposal_envelope} disappears, and only the
uniform and Pareto components remain. The resulting simplified curve
$g$ is a scaled version of the one proposed in \citep[page~207]{hormann2004automatic}
to simulate the Fej\'er-de la Vall\'ee-Poussin distribution (Figure~\ref{fig:proposal_triangular},
left) by acceptance-rejection.
\end{rem}

\begin{figure}[H]
\begin{minipage}[t]{0.48\columnwidth}%
\includegraphics[width=0.38\paperwidth]{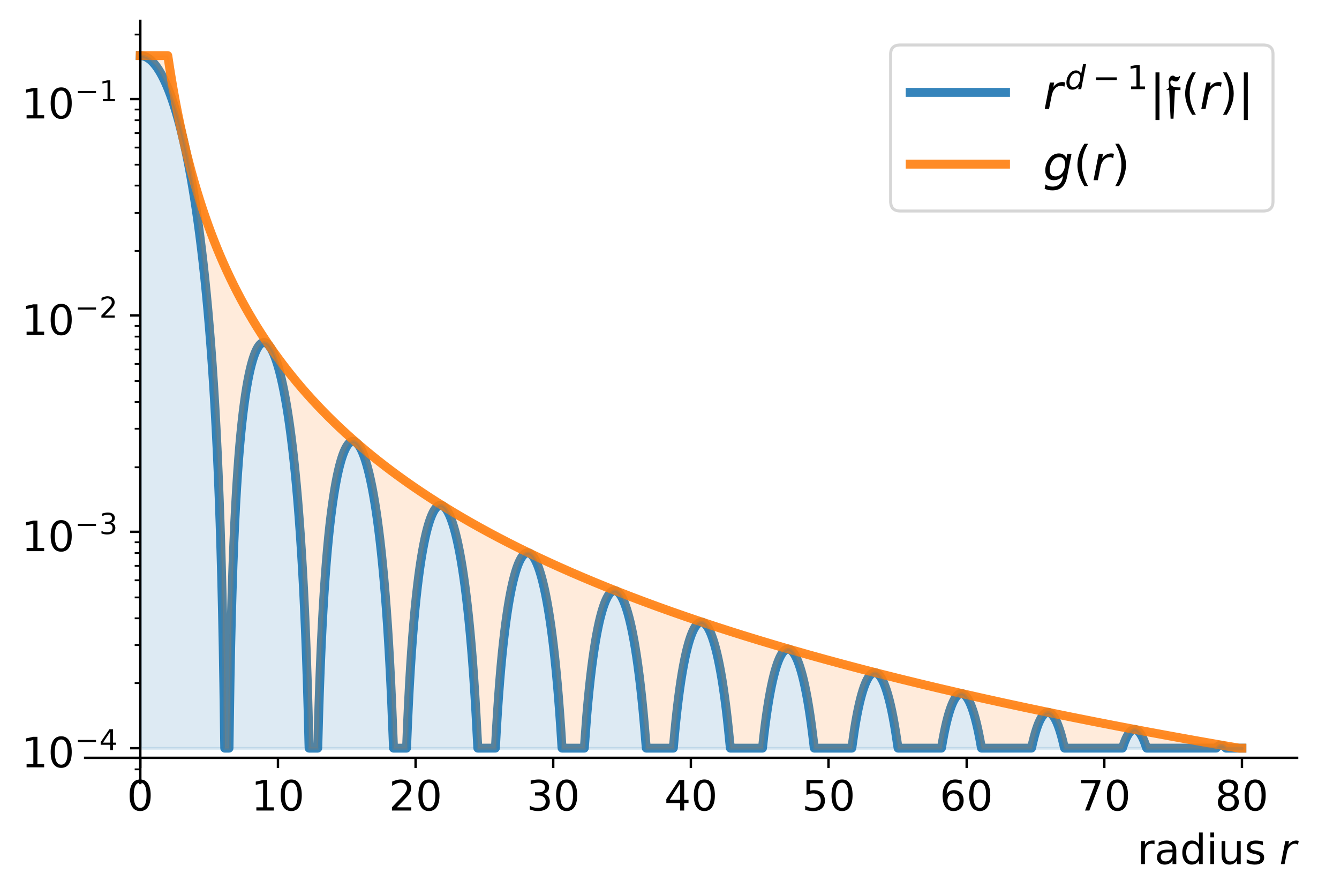}%
\end{minipage}\hfill{}%
\begin{minipage}[t]{0.48\columnwidth}%
\includegraphics[width=0.38\paperwidth]{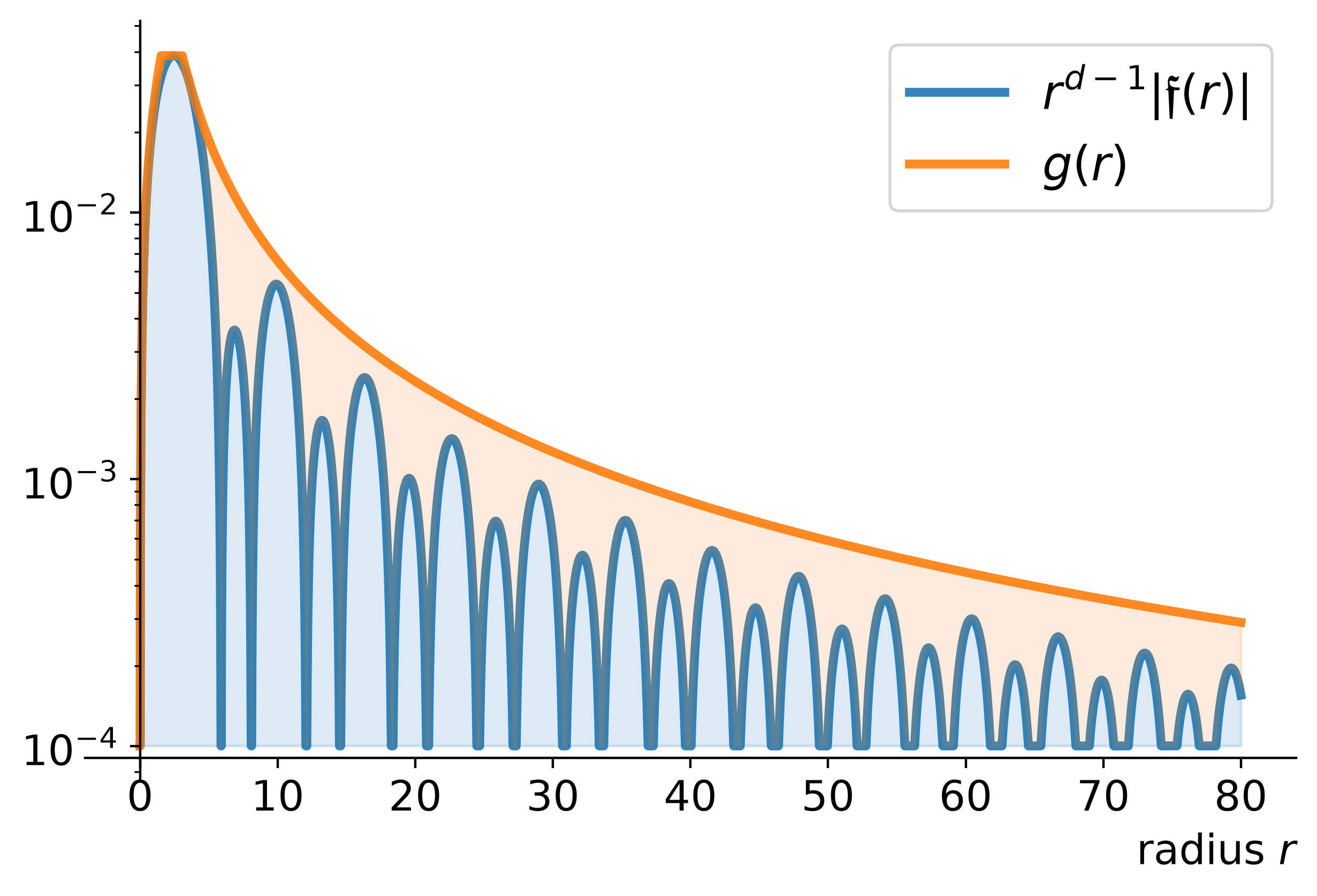}%
\end{minipage}

\caption{Proposal densities for the spectral radius densities proportional
to $r^{d-1}\left|\mathfrak{f}(r)\right|$ of the triangular kernel
(left: $d=1$, right: $d=2$).\label{fig:proposal_triangular}}
\end{figure}

\begin{figure}[H]
\begin{minipage}[t]{0.48\columnwidth}%
\includegraphics[width=0.38\paperwidth]{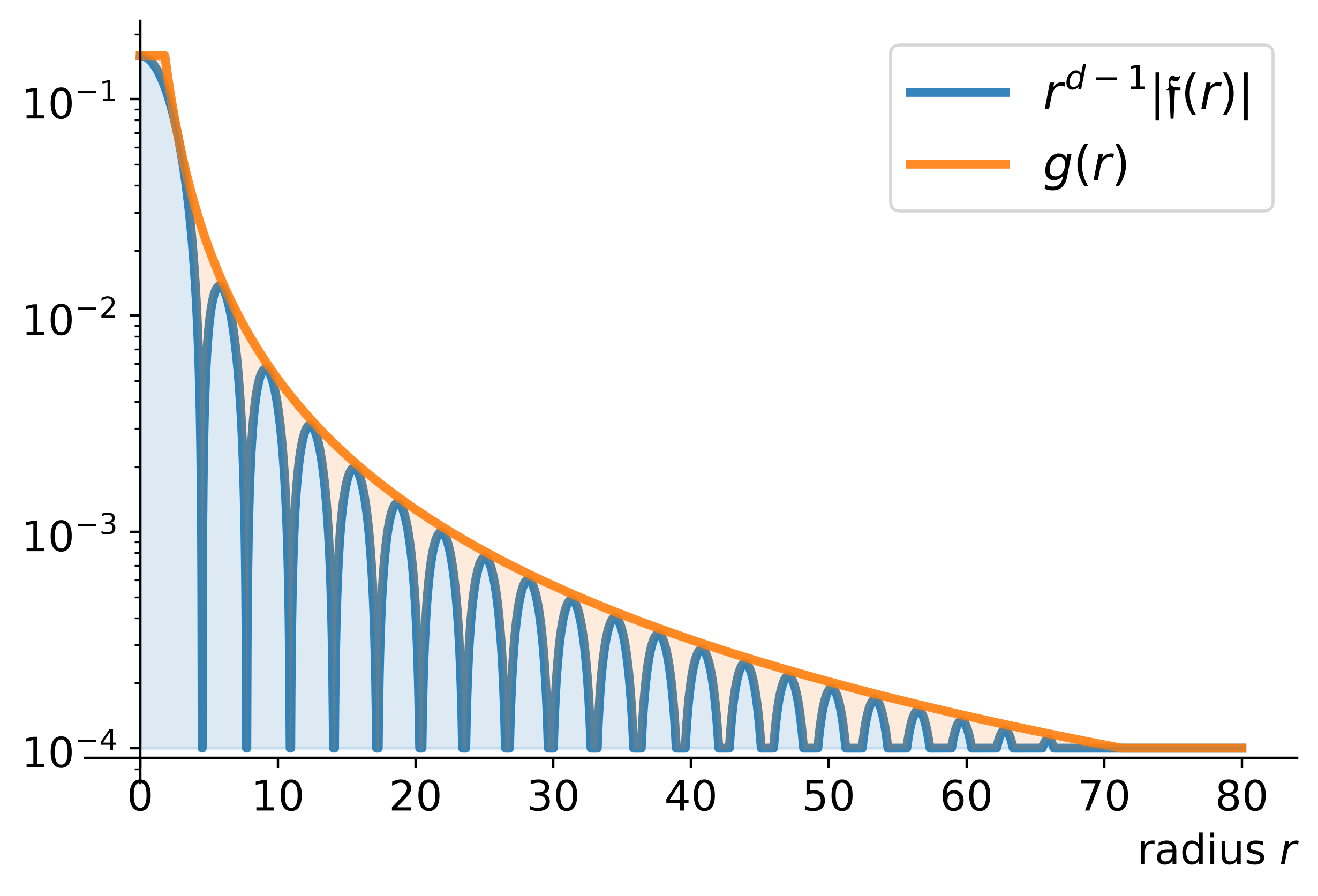}%
\end{minipage}\hfill{}%
\begin{minipage}[t]{0.48\columnwidth}%
\includegraphics[width=0.38\paperwidth]{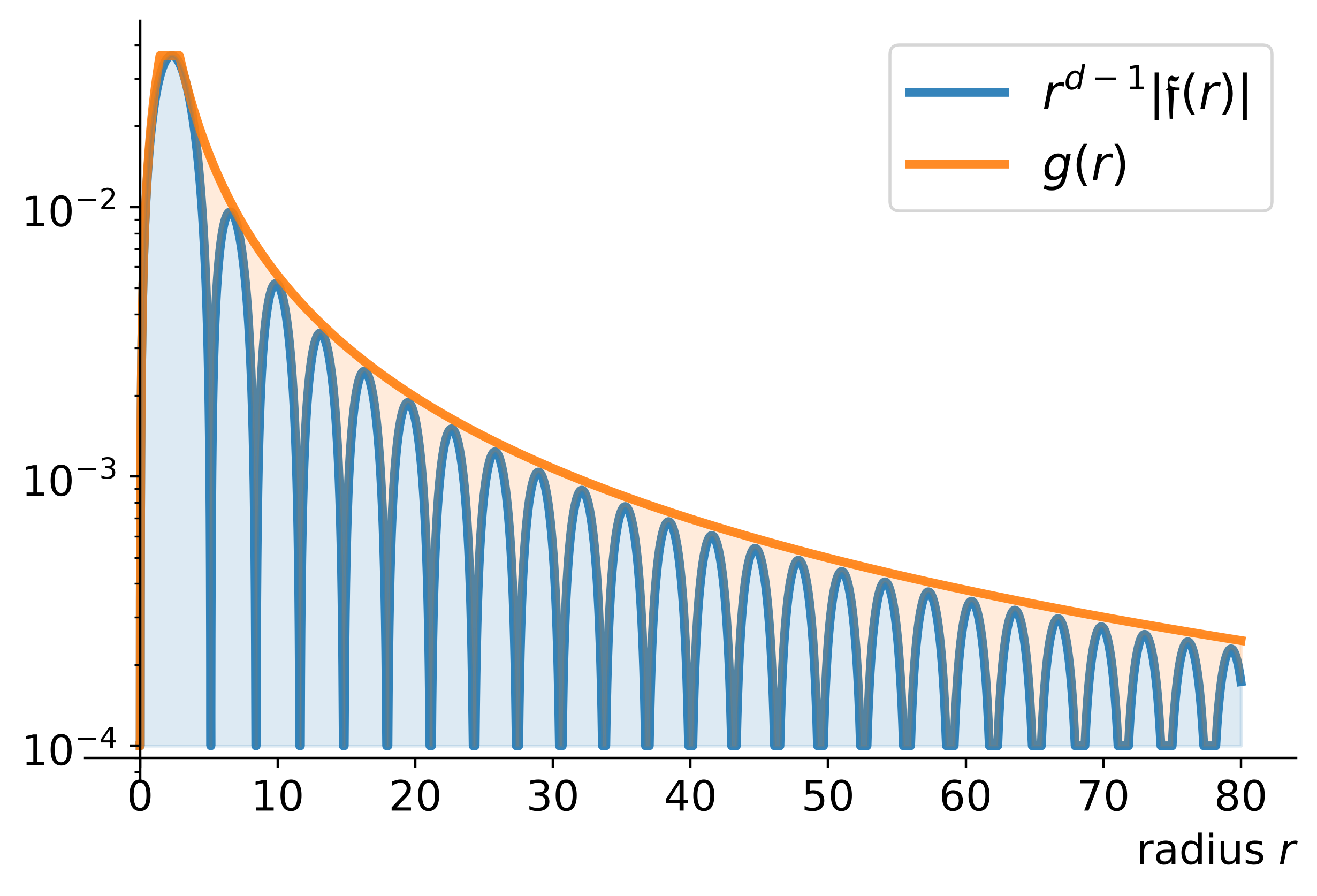}%
\end{minipage}

\caption{Proposal densities for the spectral radius densities proportional
to $r^{d-1}\left|\mathfrak{f}(r)\right|$ of the parabolic kernel
(left: $d=1$, right: $d=2$).\label{fig:proposal_parabolic}}
\end{figure}

\begin{figure}[H]
\begin{minipage}[t]{0.48\columnwidth}%
\includegraphics[width=0.38\paperwidth]{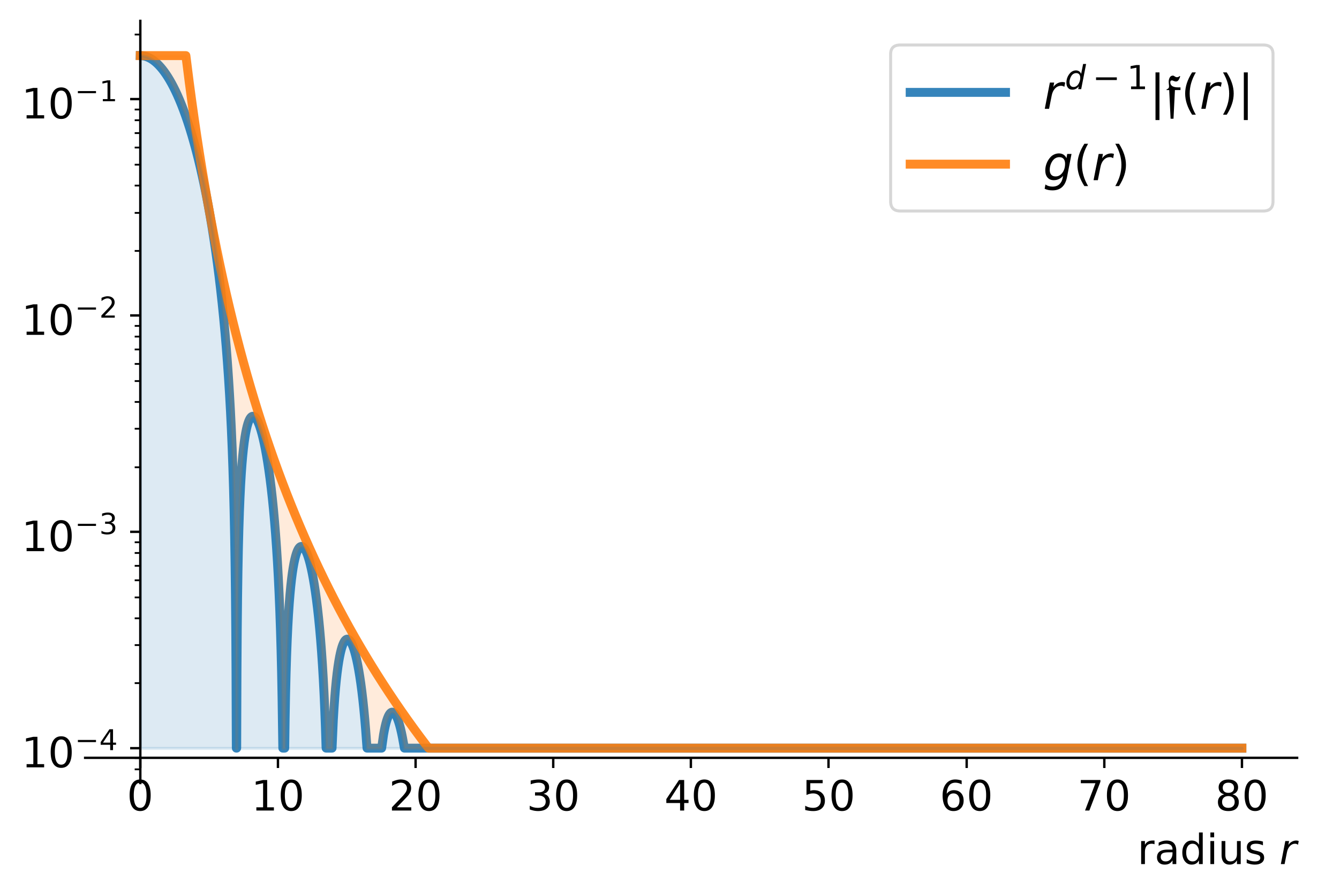}%
\end{minipage}\hfill{}%
\begin{minipage}[t]{0.48\columnwidth}%
\includegraphics[width=0.38\paperwidth]{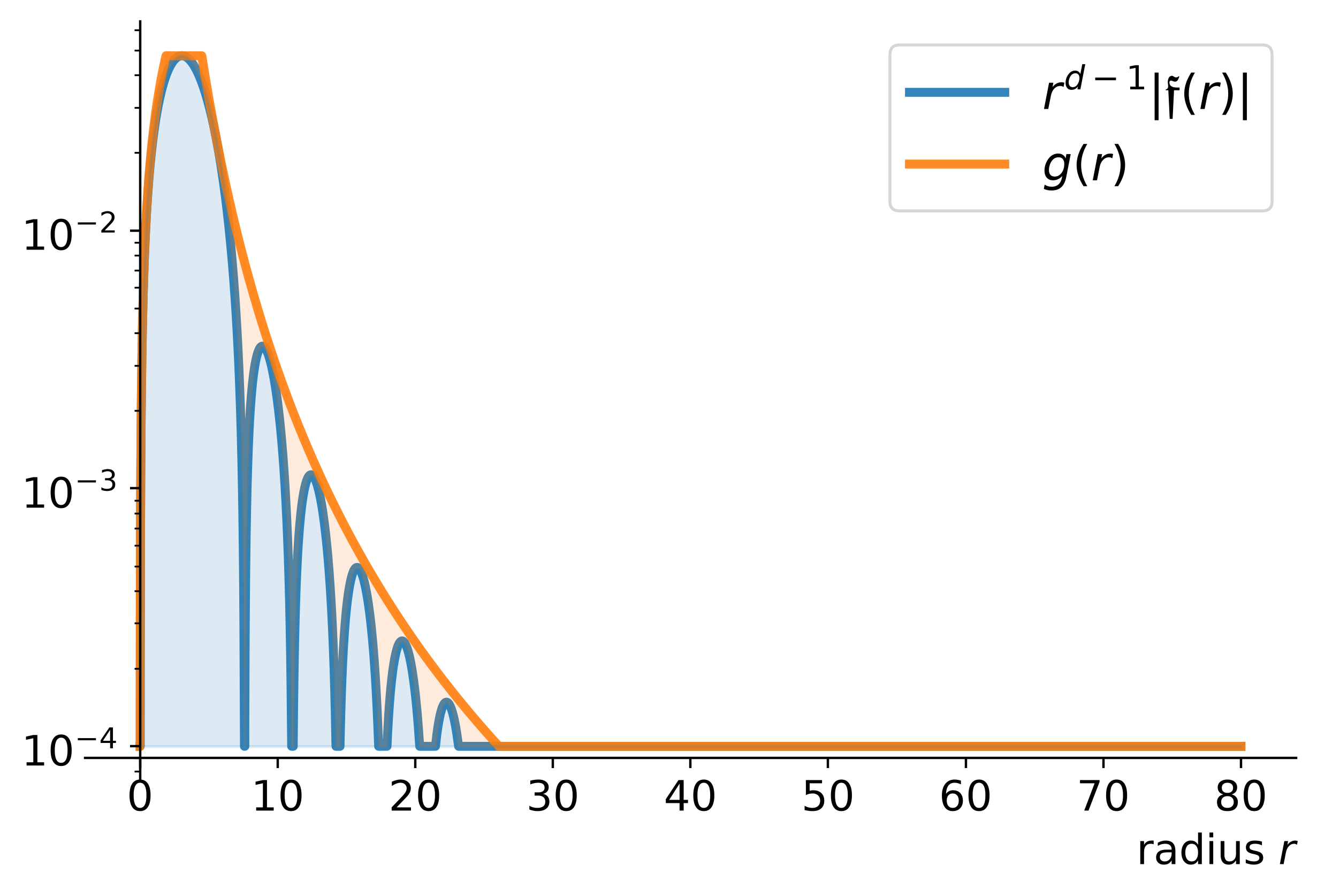}%
\end{minipage}

\begin{minipage}[t]{0.48\columnwidth}%
\includegraphics[width=0.38\paperwidth]{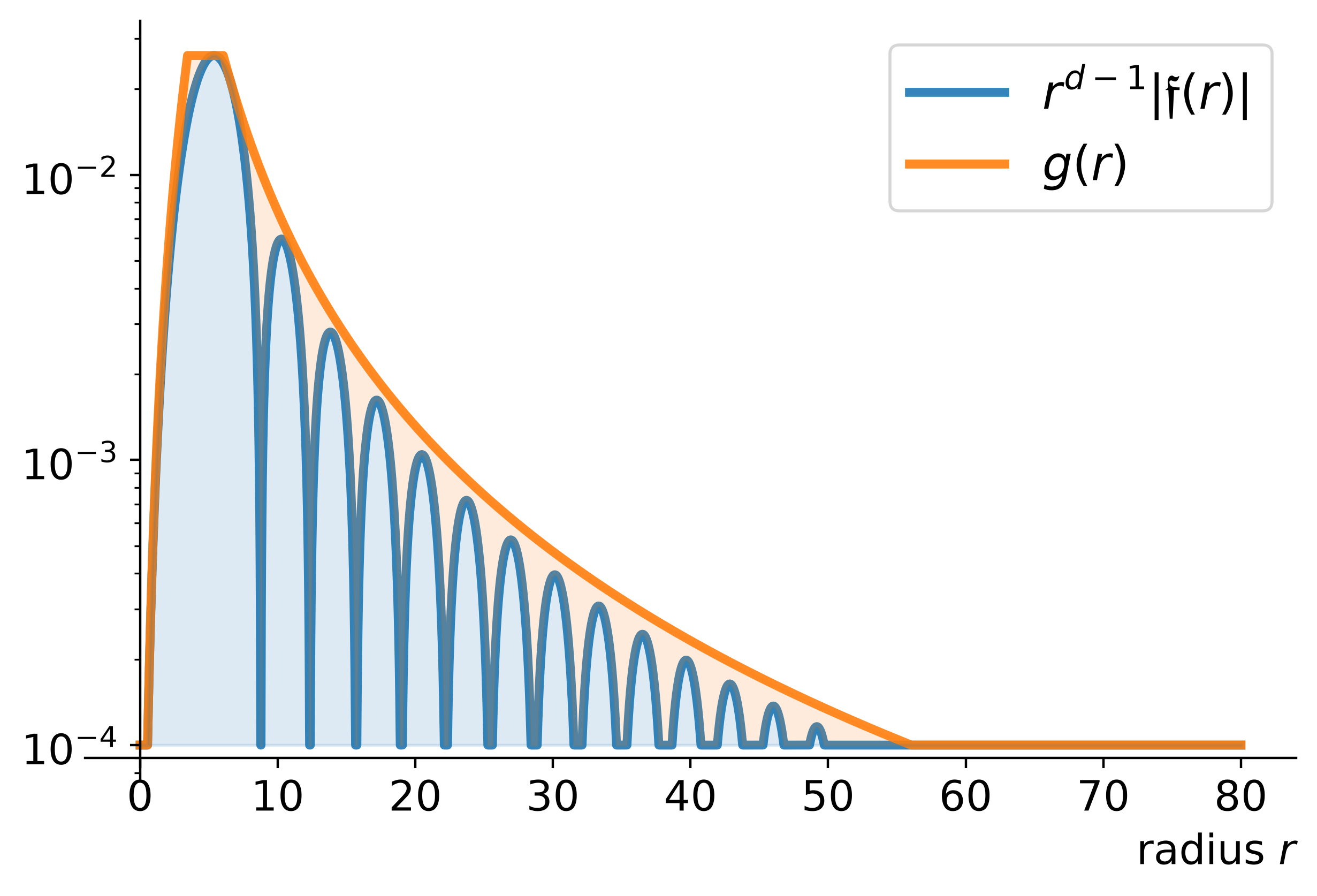}%
\end{minipage}\hfill{}%
\begin{minipage}[t]{0.48\columnwidth}%
\includegraphics[width=0.38\paperwidth]{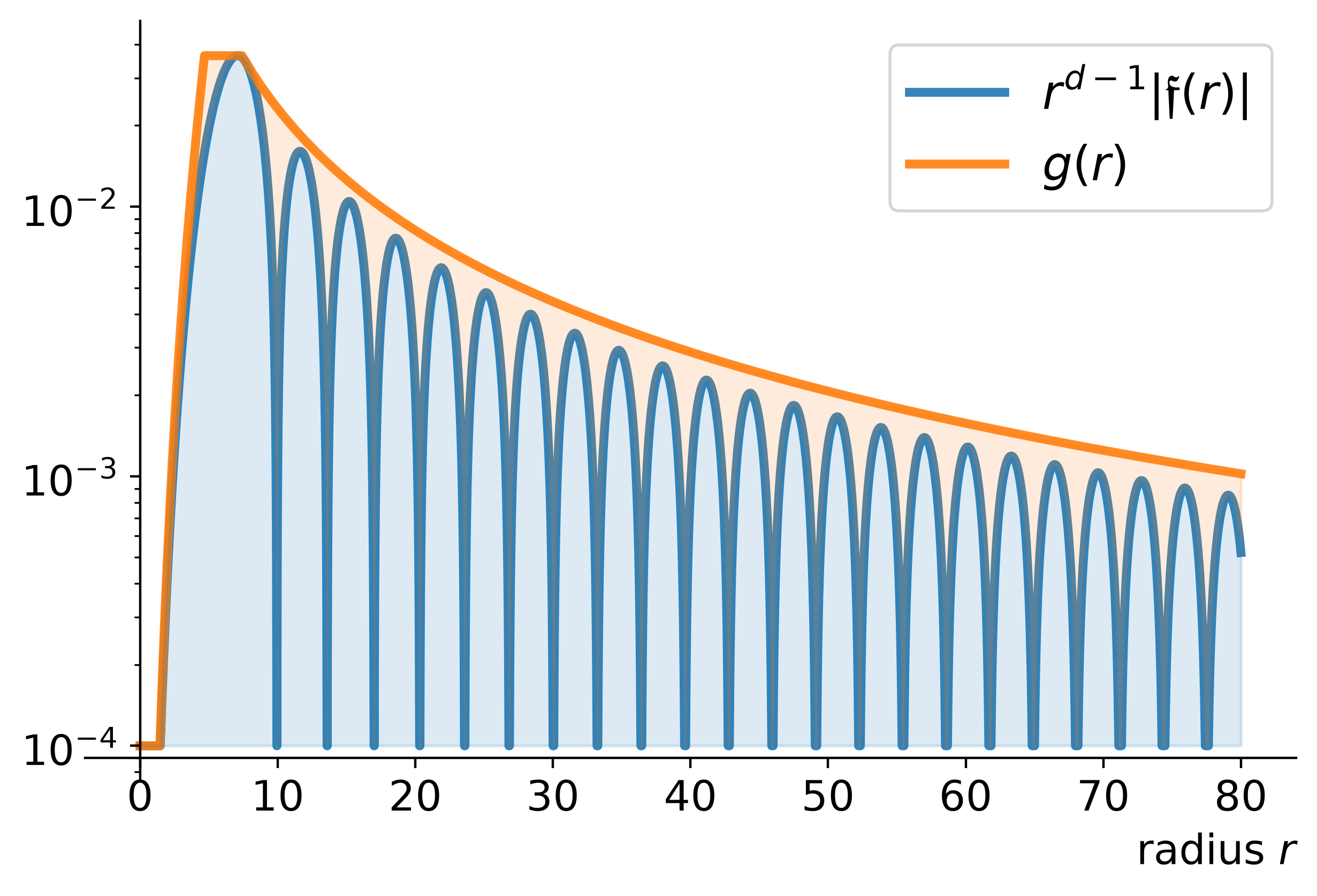}%
\end{minipage}

\caption{Proposal densities for the spectral radius densities proportional
to $r^{d-1}\left|\mathfrak{f}(r)\right|$ of the triweight kernel
(top left: $d=1$, top right: $d=2$, bottom left: $d=4$, bottom
right: $d=6$).\label{fig:proposal_triweight}}
\end{figure}

For comparison, Figure~\ref{fig:proposal_parabolic_fplus_fminus}
applies the same optimized envelope~\eqref{eq:proposal_envelope}
to the positive and negative parts $f_{+}$ and $f_{-}$ of the inverse Fourier
transform of the univariate parabolic kernel. This figure illustrates
why sampling from $f_{+}$ and $f_{-}$ separately by acceptance-rejection,
as done in \citep{liu2021fast,luo2021towards,he2024random} is less
efficient than sampling from $\left|f\right|$ (Figure~\ref{fig:proposal_parabolic}):
since $f_{+}$ and $f_{-}$ have non-overlapping support, the effective
rejection rate is much higher in the $f_{+}$/$f_{-}$ sampling approach
(Figure~\ref{fig:proposal_parabolic_fplus_fminus}, rejection is
guaranteed more than half of the time) than in the $\left|f\right|$
sampling approach (Figure~\ref{fig:proposal_parabolic}).

\begin{figure}[H]
\begin{minipage}[t]{0.48\columnwidth}%
\includegraphics[width=0.38\paperwidth]{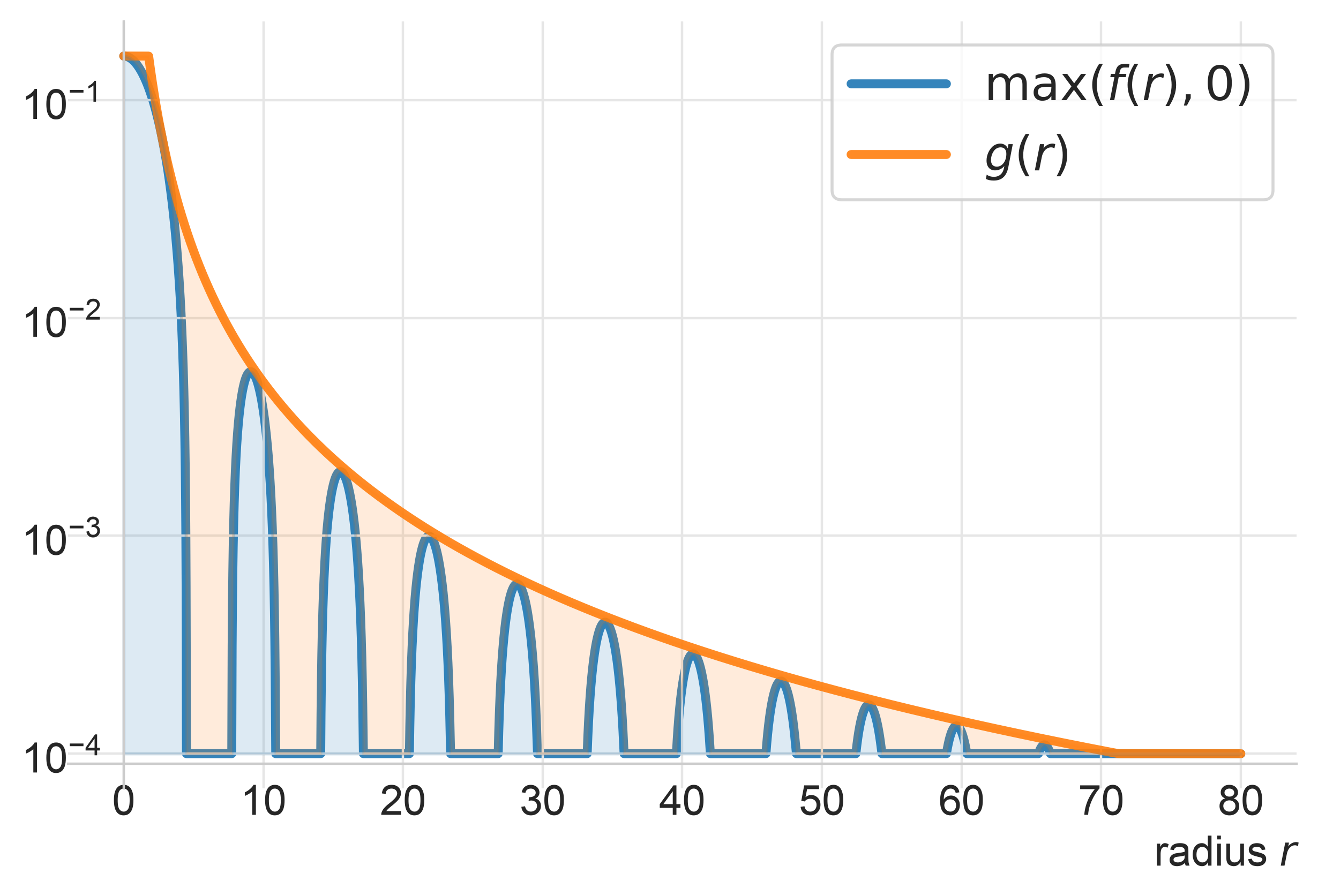}%
\end{minipage}\hfill{}%
\begin{minipage}[t]{0.48\columnwidth}%
\includegraphics[width=0.38\paperwidth]{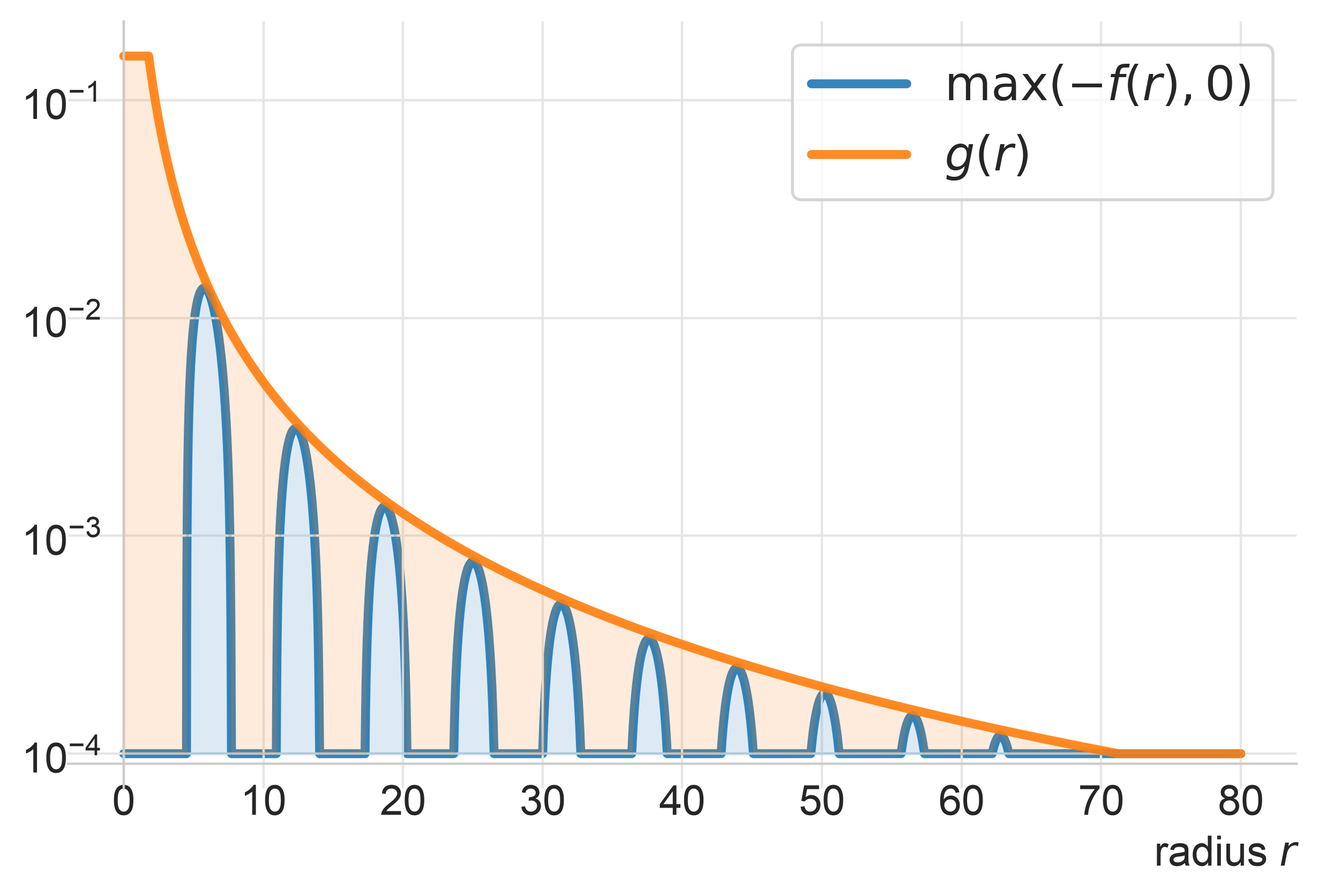}%
\end{minipage}

\caption{Proposal densities for the truncated spectral radius densities proportional
to the positive part $\max(f(r),0)$ (left) and the negative part
$\max(-f(r),0)$ (right) of the inverse Fourier transform of the univariate
parabolic kernel. To be compared to Figure~\ref{fig:proposal_parabolic}
(left).\label{fig:proposal_parabolic_fplus_fminus}}
\end{figure}

Finally, Figures \ref{fig:triangular_1d}-\ref{fig:triangular_2d}-\ref{fig:parabolic_1d}-\ref{fig:parabolic_2d}-\ref{fig:triweight_1d}-\ref{fig:triweight_2d}
show on some examples that the signed random Fourier features formula~\eqref{eq:SRFF_MC}
implemented with the spectral rejection sampling algorithms~\ref{algo:acceptance_rejection}-\ref{algo:proposal_sampling}
is indeed able to recover the shape of univariate
and bivariate indefinite kernels.

\begin{figure}[H]
\begin{centering}
\includegraphics[width=1\textwidth]{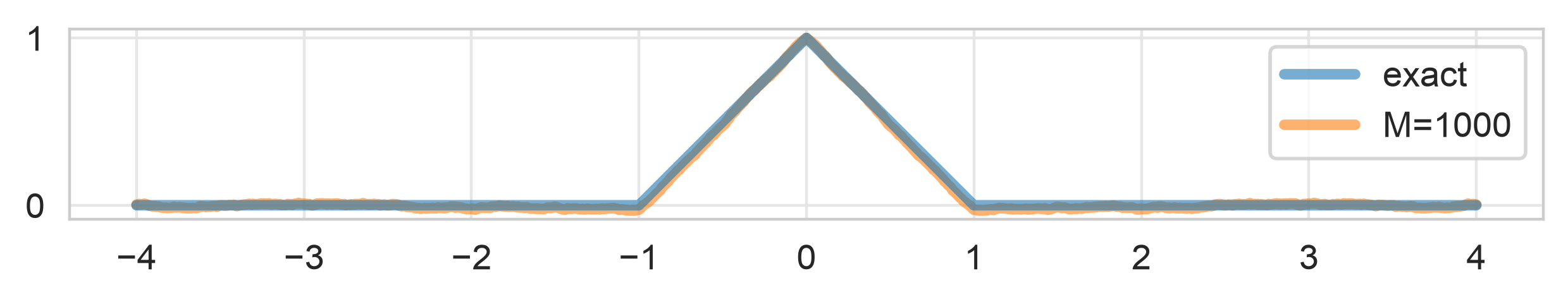}
\par\end{centering}
\vspace{-2mm}
\caption{Univariate triangular kernel and its random Fourier features approximation
\eqref{eq:random_fourier_features} using $M=1000$ random projections.\label{fig:triangular_1d}}
\end{figure}

\begin{figure}[H]
\begin{centering}
\includegraphics[width=1\textwidth, trim=0 0 0 16mm, clip]{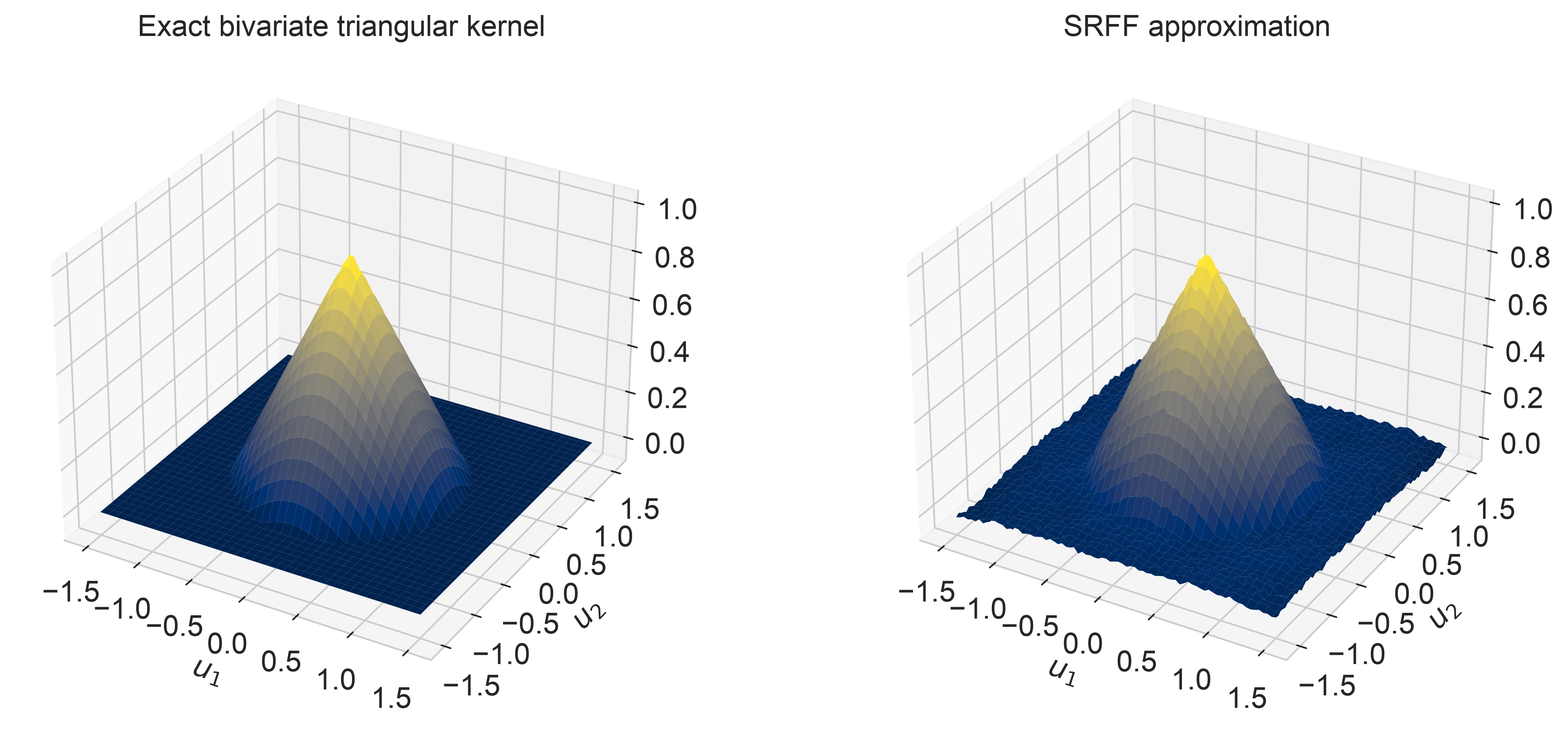}
\par\end{centering}
\caption{Bivariate triangular kernel (left) and its signed random Fourier features
approximation \eqref{eq:SRFF_MC} (right) using $M=4000$ random projections.\label{fig:triangular_2d}}
\end{figure}

\begin{figure}[H]
\begin{centering}
\includegraphics[width=1\textwidth]{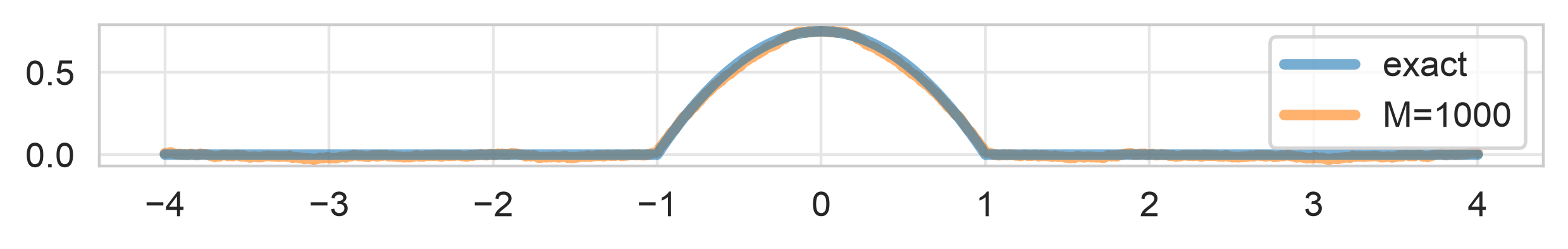}
\par\end{centering}
\vspace{-3mm}
\caption{Univariate parabolic kernel and its signed random Fourier features
approximation \eqref{eq:SRFF_MC} using $M=1000$ random projections.\label{fig:parabolic_1d}}
\end{figure}
\vspace{-2mm}
\begin{figure}[H]
\begin{centering}
\includegraphics[width=1\textwidth, trim=0 0 0 16mm, clip]{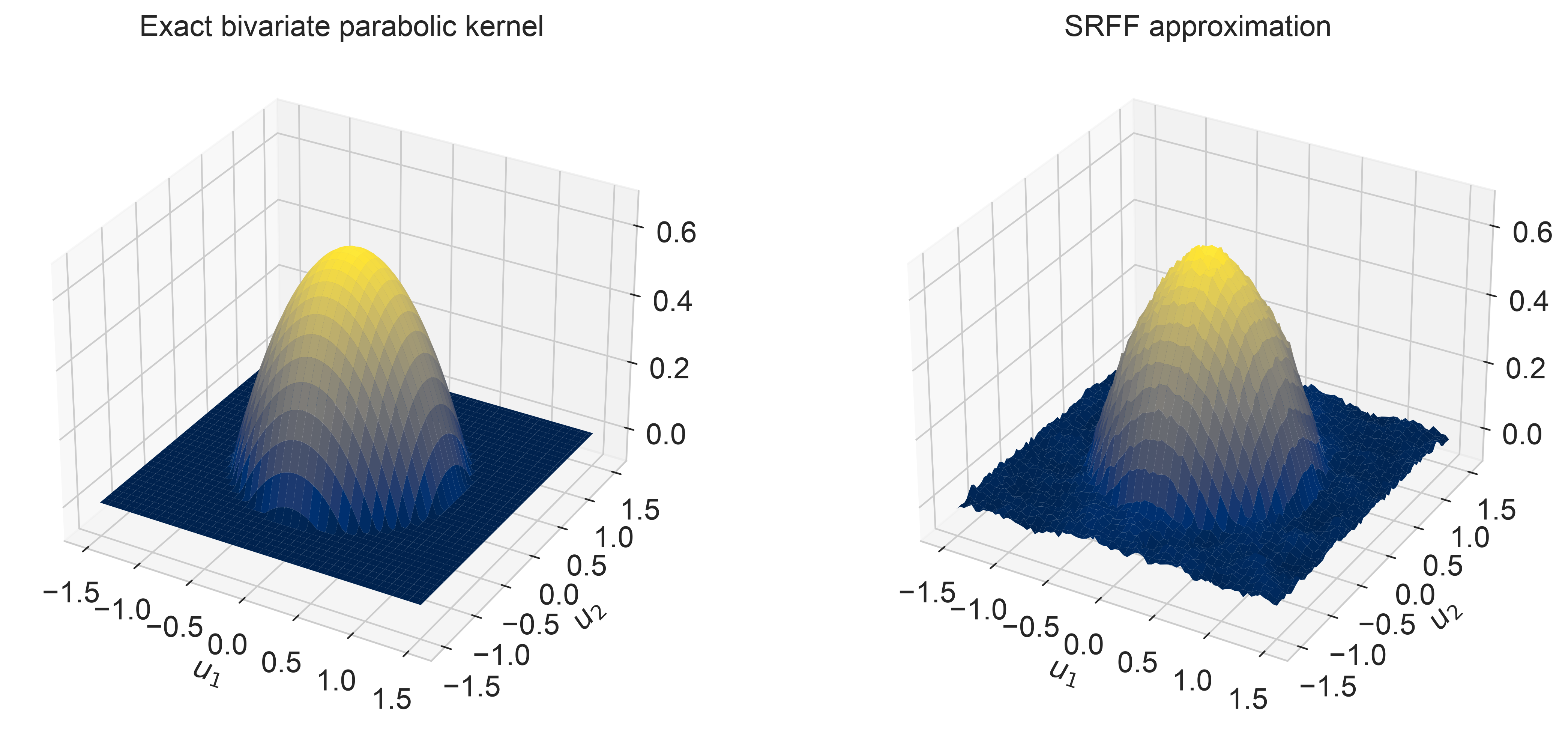}
\par\end{centering}
\vspace{-3mm}
\caption{Bivariate parabolic kernel (left) and its signed random Fourier features
approximation \eqref{eq:SRFF_MC} (right) using $M=4000$ random projections.\label{fig:parabolic_2d}}
\end{figure}
\vspace{-2mm}
\begin{figure}[H]
\begin{centering}
\includegraphics[width=1\textwidth]{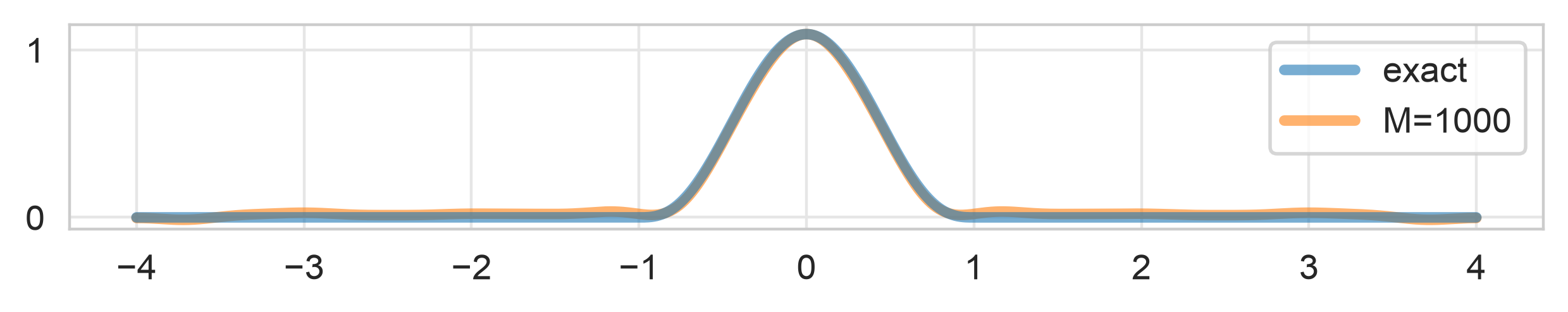}
\par\end{centering}
\vspace{-3mm}
\caption{Univariate triweight kernel and its signed random Fourier features
approximation \eqref{eq:SRFF_MC} using $M=1000$ random projections.\label{fig:triweight_1d}}
\end{figure}
\vspace{-2mm}
\begin{figure}[H]
\begin{centering}
\includegraphics[width=1\textwidth, trim=0 0 0 16mm, clip]{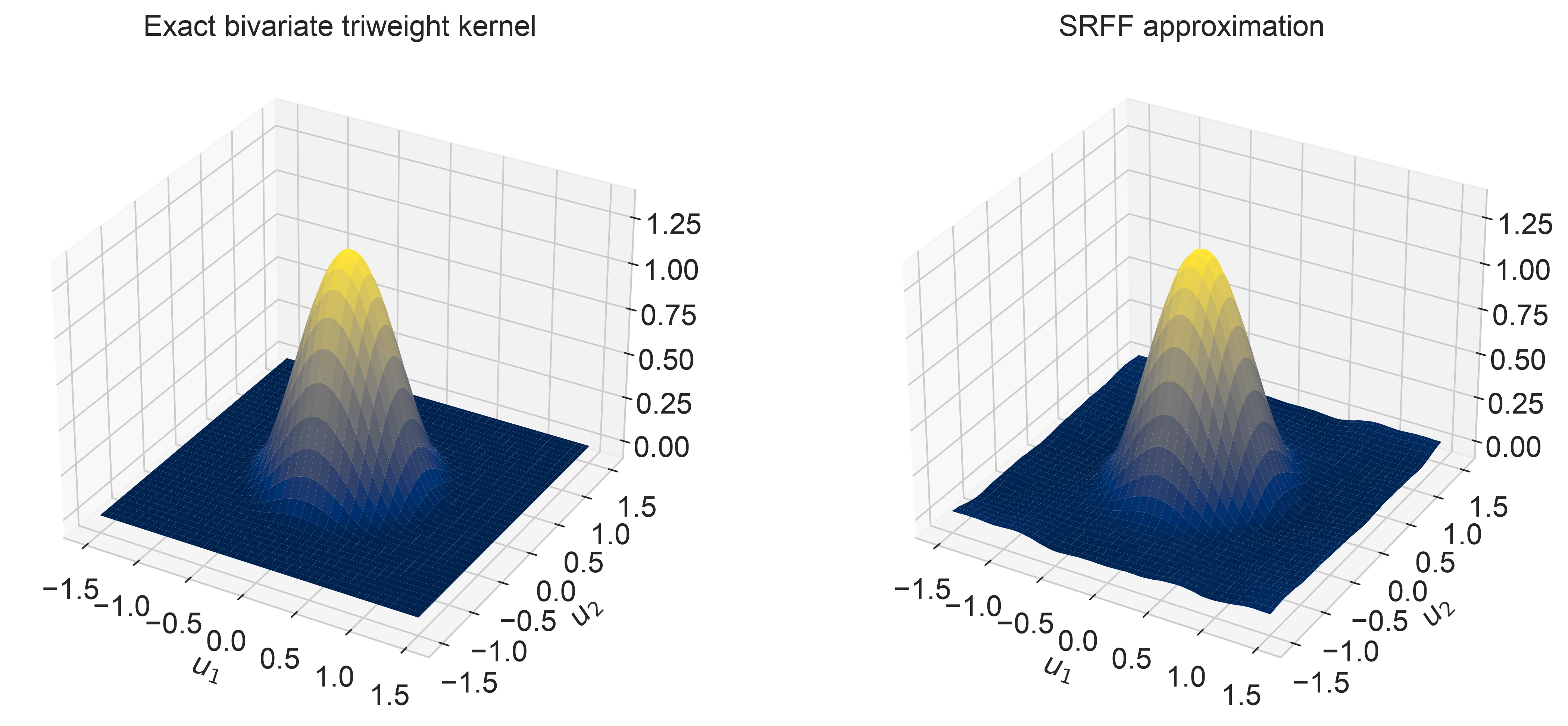}
\par\end{centering}
\vspace{-2mm}
\caption{Bivariate triweight kernel (left) and its signed random Fourier features
approximation \eqref{eq:SRFF_MC} (right) using $M=4000$ random projections.\label{fig:triweight_2d}}
\end{figure}

\section{Numerical experiments\label{sec:numerical}}

In this section, we conduct numerical tests to evaluate the performance
and accuracy of the fast kernel density estimation Algorithm~\ref{algo:fast_kernel_mvm}.
The goal is to compare the results obtained using SRFF with those
produced by direct summation, and to evaluate both the computational
efficiency and the precision of this approximation method. 

\subsection*{Computational setup}

The accuracy experiments are performed on a laptop equipped with a
13th Gen Intel(R) Core(TM) i9-13900H processor running at 2.60 GHz
and 32.0 GB of RAM (31.7 GB usable). The runtime experiments are performed
on a Linux compute node with 64 CPU
cores (128 threads), a base clock speed of 2.9 GHz, a maximum clock
speed of 4.0 GHz, 2048 GB of RAM, and 9.7 TB of scratch disk space.
The experiments use 16 CPU threads. Algorithm~\ref{algo:fast_kernel_mvm}
is implemented in C++. Runtime is measured using high-resolution clock
functions.

\subsection*{Error metric}

To measure the approximation error of the SRFF
algorithm, we compute the mean absolute error (MAE) between the kernel
MVM~\eqref{eq:kernel_mvm} and its SRFF
approximation computed by Algorithm~\ref{algo:fast_kernel_mvm},
averaged over the whole dataset $\mathcal{D}=\left\{ (\boldsymbol{x}_{n},y_{n})\in\mathbb{R}^{d}\times\mathbb{R}\ensuremath{,}n=1,\ldots,N\right\} $:
\[
\mathrm{MAE}((\bm{\eta}_{1},\ldots,\bm{\eta}_{M})):=\frac{1}{N}\sum^{N}_{n=1}\left|\left(\sum^{N}_{n'=1}y_{n'}K(\boldsymbol{x}_{n}-\boldsymbol{x}_{n'})\right)-\left(\sum^{N}_{n'=1}y_{n'}K_{M}(\boldsymbol{x}_{n}-\boldsymbol{x}_{n'})\right)\right|,
\]
which depends on the $M$ simulations $\bm{\eta}_{1},\ldots,\bm{\eta}_{M}$
of the random projection vector $\bm{\eta}$. Since this $\mathrm{MAE}$
is random, in practice we resimulate these $M$ random projections
$L$ times independently, and report the sample average
\[
\mathrm{MAE}:=\frac{1}{L}\sum^{L}_{\ell=1}\mathrm{MAE}((\bm{\eta}_{1,(\ell)},\ldots,\bm{\eta}_{M,(\ell)}))
\]
where $\bm{\eta}_{(\ell)}=(\bm{\eta}_{1,(\ell)},\ldots,\bm{\eta}_{M,(\ell)})$,
$\ell=1,\ldots,L$, are $L$ independent copies of the set of $M$
independent simulations $(\bm{\eta}_{1},\ldots,\bm{\eta}_{M})$.

\subsection*{Accuracy evaluation}

We evaluate the accuracy of SRFF using the MAE defined above. We fix $N=1{,}000{,}000$ and
consider $d=1$ and $d=2$ as examples. For each dimension, we test
$28$ values of $M$, ranging from $17$ to $196{,}609$. For each
pair of $d$ and $M$, we report the mean MAE over $L=5$ distinct
random trials. Within each trial, we generate one master sequence
containing $196{,}609$ random frequencies. For a given $M$, the
estimator uses the first $M$ frequencies from this sequence. The
estimates are therefore nested across different values of $M$ within
the same trial, while different trials use independent master sequences.

\begin{figure}[H]
\begin{minipage}[t]{0.48\columnwidth}%
\includegraphics[width=0.38\paperwidth]{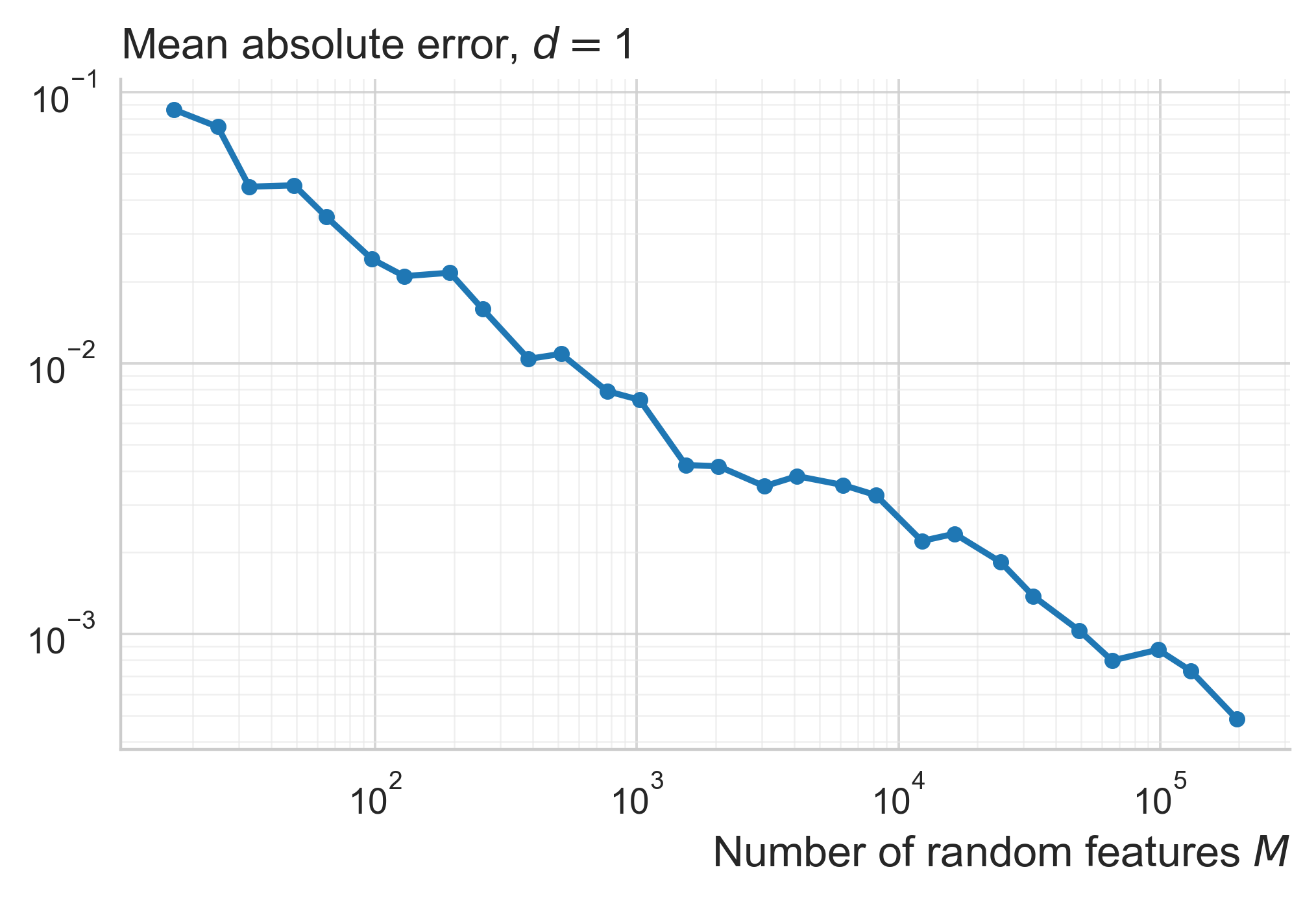}%
\end{minipage}\hfill{}%
\begin{minipage}[t]{0.48\columnwidth}%
\includegraphics[width=0.38\paperwidth]{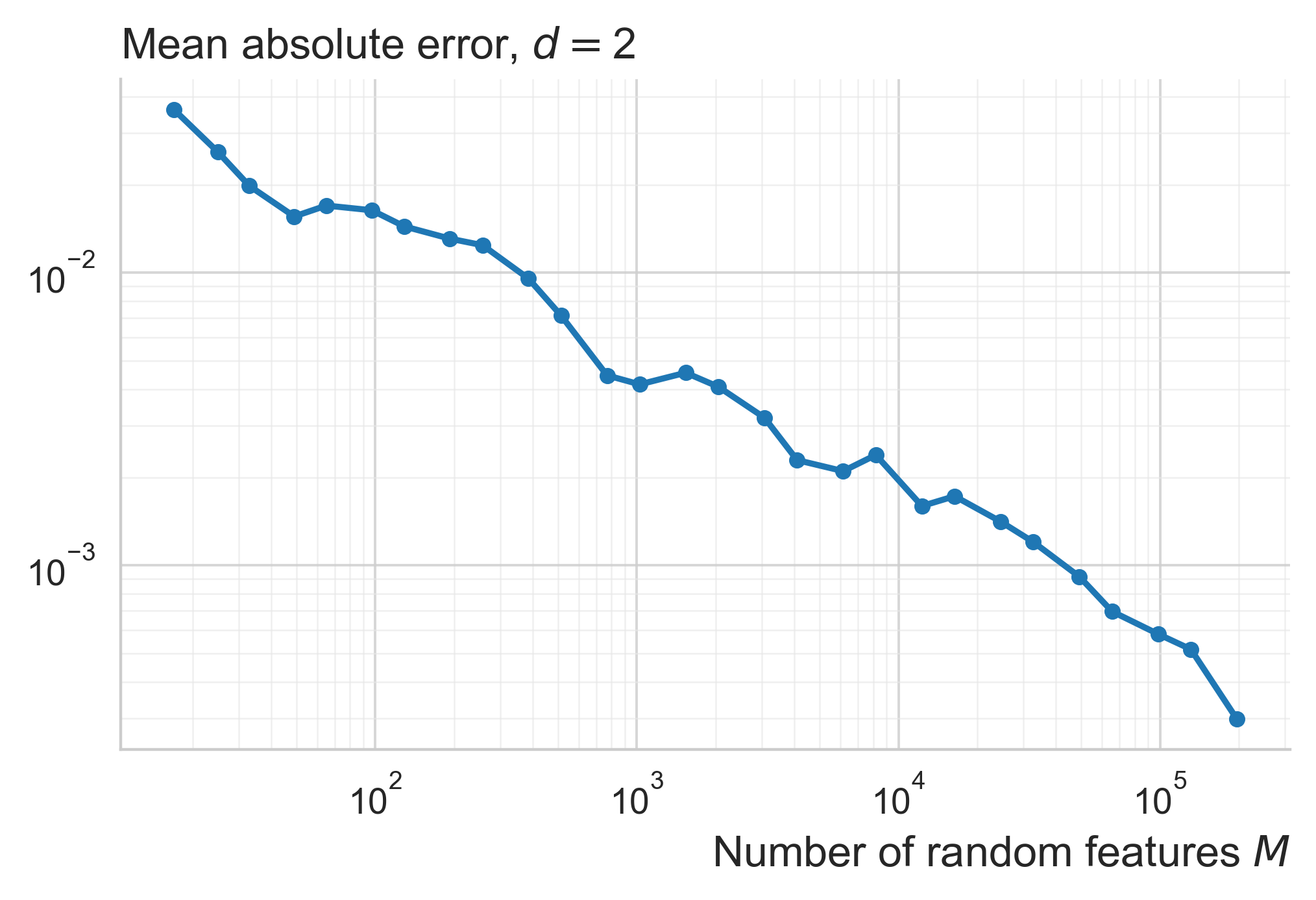}%
\end{minipage}

\caption{Mean absolute error of SRFF with respect to $M$ (left: $d=1$, right:
$d=2$).\label{fig:accuracy}}
\end{figure}

Figure~\ref{fig:accuracy} shows, as expected, an overall decrease
in MAE as $M$ increases. With $L=5$, small local fluctuations are
still visible, but the log--log trends have slopes close to $-1/2$,
as expected from the Monte Carlo approximation formula \eqref{eq:SRFF_MC}.
When $M=196{,}609$, the average MAE is approximately $4.85\times10^{-4}$
for $d=1$ and $3.0\times10^{-4}$ for $d=2$.

\subsection*{Speed evaluation}

To assess computational efficiency, we compare the runtime of direct
KDE, with complexity $O(N^{2})$, with that using SRFF decomposition,
with complexity $O(NM)$. In this numerical experiment using the parabolic
kernel, we set $N=1{,}000{,}000$ and consider $d=1$ and $d=2$.
The reported runtimes are averaged over seven repeated experiments.

\begin{figure}[H]
\begin{minipage}[t]{0.48\columnwidth}%
\includegraphics[width=0.38\paperwidth]{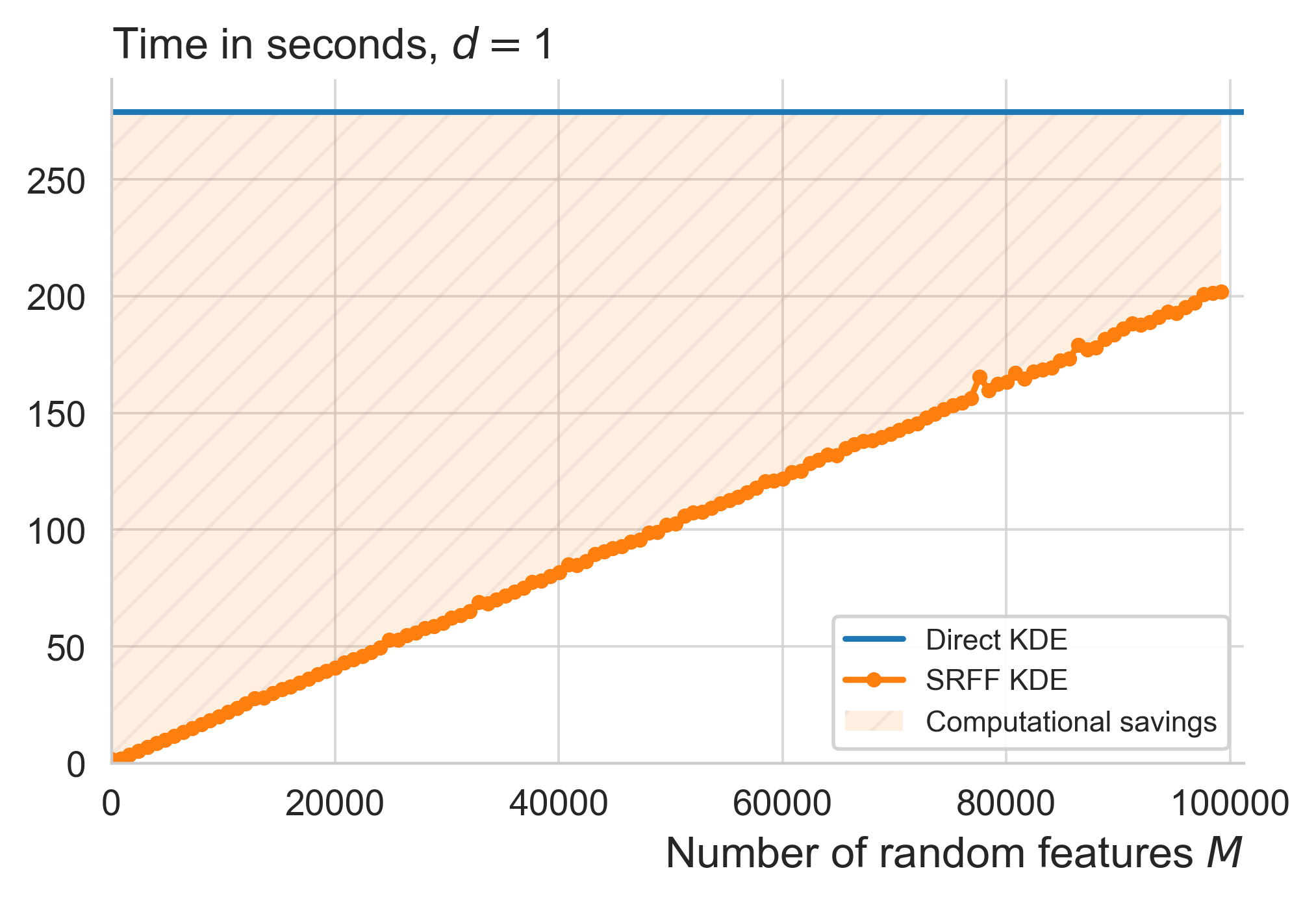}%
\end{minipage}\hfill{}%
\begin{minipage}[t]{0.48\columnwidth}%
\includegraphics[width=0.38\paperwidth]{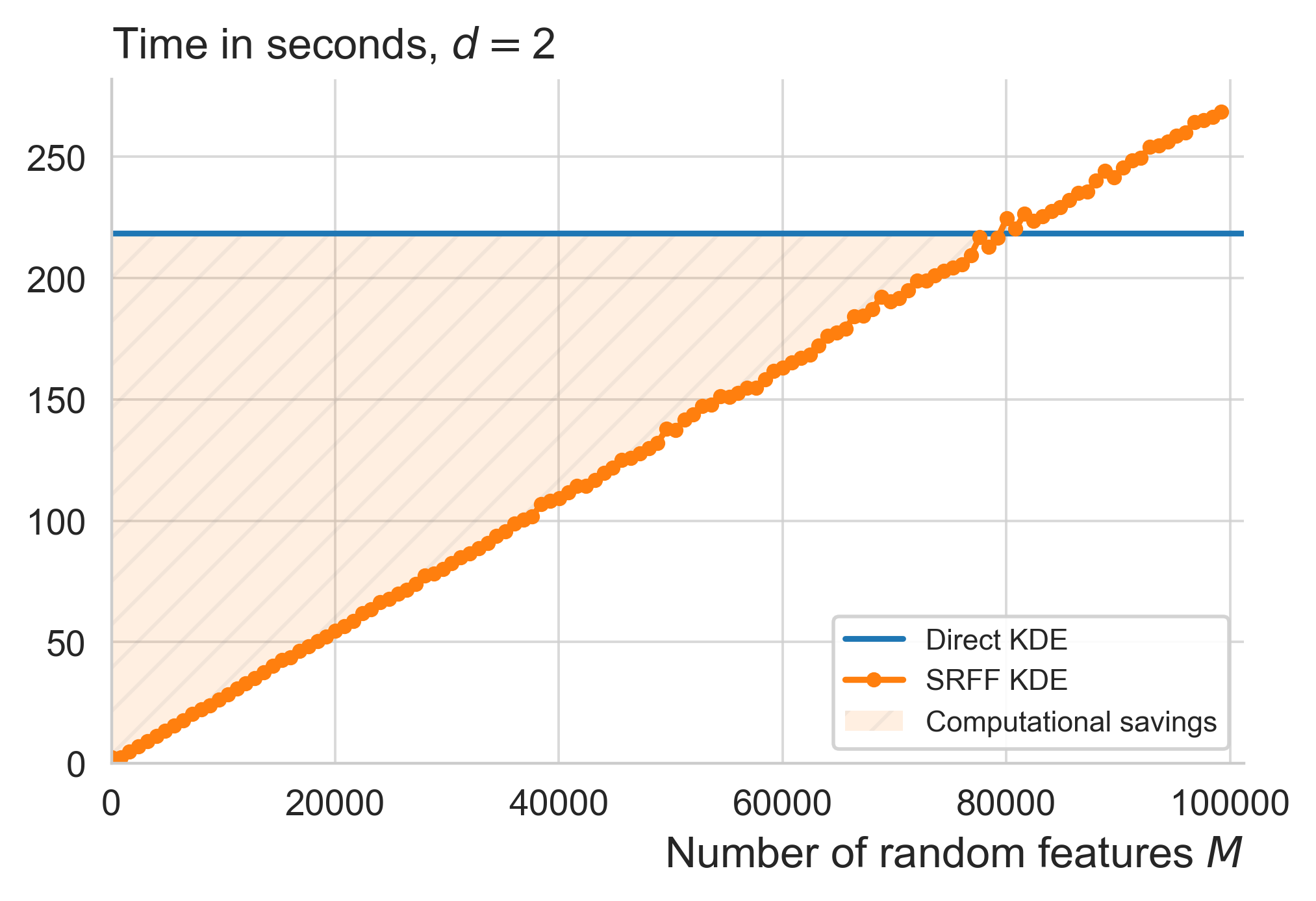}%
\end{minipage}

\caption{Runtime comparison between direct KDE and SRFF KDE with respect to
the number of random features $M$ (left: $d=1$, right: $d=2$).\label{fig:speed}}
\end{figure}

Figure~\ref{fig:speed} shows that the SRFF runtime increases approximately
linearly with $M$ in both dimensions, as expected. For any reasonable
value of $M$ (usually set to a few hundreds or a few thousands in
the literature), SRFF provides huge computational savings compared
to direct KDE. For example, when $M=1000$, KDE is computed on $N=1{,}000{,}000$ points in about 2 seconds using SRFF, which is more than a hundred times faster than direct KDE evaluation.
By combining the results from Figures~\ref{fig:accuracy}
and~\ref{fig:speed}, one can adjust the value of $M$ to achieve
any desired trade-off between approximation accuracy and computational
cost.

\section{Conclusion\label{sec:conclusion}}

This article introduced the signed random Fourier features technique
\eqref{eq:SRFF_MC} (SRFF), which is an extension of the classical
random Fourier features technique \citep{rahimi2007random} (RFF)
to kernel function that are not positive definite, such as most compact
kernel functions encountered in kernel density estimation (Table~\ref{tab:kuttner_golubov}).
We provided a detailed algorithm to simulate the signed spectral distribution
of such kernels (Algorithm~\ref{algo:proposal_sampling}) in the
case of the general class of compact multivariate Kuttner-Golubov
kernels \eqref{eq:kuttner_golubov_kernel}, for which we provided an
analytical formula \eqref{eq:fourier_kuttner_golubov} for its Fourier
transform. We showed that sampling from the absolute value $\left|f\right|/\left\Vert f\right\Vert _{1}$
of the inverse Fourier transform $f$ of the indefinite kernel $K$ under
consideration by acceptance-rejection is more efficient than sampling
from its positive part $f_{+}/\left\Vert f_{+}\right\Vert _{1}$ and
negative part $f_{-}/\left\Vert f_{-}\right\Vert _{1}$ independently,
as done so far in the literature. Moreover, our approach does not
require to compute the constants $\left\Vert f_{+}\right\Vert _{1}$,
$\left\Vert f_{-}\right\Vert _{1}$, or $\left\Vert f\right\Vert _{1}$
analytically. Then, we constructed an algorithm to perform fast kernel
matrix-vector multiplication using SRFF (Algorithm~\ref{algo:fast_kernel_mvm}),
which includes kernel density estimation as a particular case. Our
numerical tests confirm the speed and accuracy of SRFF for kernel
density estimation. More generally, the SRFF methodology can be applied
to any kernel-based estimator built upon indefinite kernels, and the
underpinning signed Monte Carlo methodology \eqref{eq:EgX_SMC}
can be applied to any sampling estimator built upon signed densities.

\section*{Acknowledgements}
Wen Chen acknowledges the support of the BNBU Start-up Research Fund UICR0700082-24. Nicolas Langren\'e and Wen Chen acknowledge the support of the Guangdong Provincial/Zhuhai Key Laboratory of IRADS (2022B1212010006).

\bibliographystyle{apalike}
\bibliography{biblio}

\appendix

\section{Fourier transforms of multivariate compact kernels\label{sec:fourier_transforms}}

Let $K(\boldsymbol{u})=k(\left\Vert \boldsymbol{u}\right\Vert )$,
$\boldsymbol{u}\in\mathbb{R}^{d}$, be a continuous, isotropic (a.k.a.
radial) kernel in $\mathbb{R}^{d}$. Then its Fourier transform $f$
\eqref{eq:f_wrt_K} is also radial, and given by \citep[Theorem~B.1]{fasshauer2007meshfree}:
\begin{equation}
f(\boldsymbol{x})=\frac{1}{(2\pi)^{\frac{d}{2}}\left\Vert \boldsymbol{x}\right\Vert ^{\frac{d}{2}-1}}\int^{\infty}_{0}k(t)t^{\frac{d}{2}}J_{\frac{d}{2}-1}(\left\Vert \boldsymbol{x}\right\Vert t)dt\ ,\ \boldsymbol{x}\in\mathbb{R}^{d},\label{eq:kernel_fourier_transform}
\end{equation}
where $J_{\nu}$ is the Bessel function of the first kind with order
$\nu$. Bochner's theorem tells us that equation~\eqref{eq:kernel_fourier_transform}
is a density, known as spectral density, if and only if the kernel
$K$ is positive definite. In this Appendix, we are going to compute
equation~\eqref{eq:kernel_fourier_transform} for several kernels
of interest in density estimation.

\subsection{Kuttner-Golubov kernels\label{subsec:kuttner_golubov_kernels}}

Recall the definition of the Kuttner-Golubov kernel~\eqref{eq:kuttner_golubov_kernel}:
\[
K(\boldsymbol{u})=k(\left\Vert \boldsymbol{u}\right\Vert )=(1-\left\Vert \boldsymbol{u}\right\Vert ^{\alpha})^{\beta}\mathbbm{1}_{\{\left\Vert \boldsymbol{u}\right\Vert \leq1\}}\ ,\ \boldsymbol{u}\in\mathbb{R}^{d},
\]
where $\alpha>0$ and $\beta>0$. First, Lemma~\ref{lem:kernel_integral}
computes the integral of this kernel, which is needed whenever a kernel
estimator requires the kernel function to be scaled as a density.

\begin{lem}
\label{lem:kernel_integral}The integral of the Kuttner-Golubov kernel
function $K$ defined in equation~\eqref{eq:kuttner_golubov_kernel}
is given explicitly by
\begin{equation}
\int_{\mathbb{R}^{d}}K(\boldsymbol{u})d\boldsymbol{u}=\int_{\left\Vert \boldsymbol{u}\right\Vert \leq1}(1-\left\Vert \boldsymbol{u}\right\Vert ^{\alpha})^{\beta}d\boldsymbol{u}=\frac{2}{\alpha}\frac{\pi^{d/2}}{\Gamma\!\left(\frac{d}{2}\right)}\frac{\Gamma\!\left(\frac{d}{\alpha}\right)\Gamma(\beta+1)}{\Gamma\!\left(\frac{d}{\alpha}+\beta+1\right)}\ .\label{eq:kernel_integral}
\end{equation}
\end{lem}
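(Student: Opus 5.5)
The plan is a polar change of variables followed by a substitution that turns the radial integral into a Beta function. Since $K(\boldsymbol{u})=k(\Vert\boldsymbol{u}\Vert)$ is radial, I would use the same polar change of variable as in the proof of Proposition~\ref{prop:scale_mixture}. The surface area of the unit sphere in $\mathbb{R}^{d}$ is $2\pi^{d/2}/\Gamma(d/2)$, which gives
\[
\int_{\mathbb{R}^{d}}K(\boldsymbol{u})d\boldsymbol{u}=\frac{2\pi^{\frac{d}{2}}}{\Gamma\!\left(\frac{d}{2}\right)}\int_{0}^{1}r^{d-1}(1-r^{\alpha})^{\beta}dr .
\]
The integrand is nonnegative and measurable, so Tonelli's theorem justifies this step without any integrability concern.

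Next, I would substitute $s=r^{\alpha}$, so that $r=s^{1/\alpha}$ and $dr=\frac{1}{\alpha}s^{\frac{1}{\alpha}-1}ds$. The radial integral then becomes
\[
\frac{1}{\alpha}\int_{0}^{1}s^{\frac{d}{\alpha}-1}(1-s)^{\beta}ds=\frac{1}{\alpha}B\!\left(\tfrac{d}{\alpha},\beta+1\right).
\]
This Beta integral converges because $d/\alpha>0$ and $\beta+1>0$.

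Finally, I would apply the identity $B(a,b)=\Gamma(a)\Gamma(b)/\Gamma(a+b)$ with $a=d/\alpha$ and $b=\beta+1$. Multiplying by the sphere constant gives exactly $\frac{2}{\alpha}\frac{\pi^{d/2}}{\Gamma(d/2)}\frac{\Gamma(d/\alpha)\Gamma(\beta+1)}{\Gamma(d/\alpha+\beta+1)}$, which is the right-hand side of~\eqref{eq:kernel_integral}.

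There is no real obstacle in this argument. The only points needing care are correct bookkeeping of the Jacobian factor $\frac{1}{\alpha}$ and of the exponent $\frac{d}{\alpha}-1$ in the substitution. As a sanity check, I would verify the case $d=1$, $\alpha=2$, $\beta=1$, where the formula should return $4/3$ for the unnormalized parabolic kernel.
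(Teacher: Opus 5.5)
Your proof is correct and follows the same route as the paper: polar coordinates with the sphere constant $2\pi^{d/2}/\Gamma(d/2)$, the substitution $s=r^{\alpha}$, and the Beta--Gamma identity. Your bookkeeping of the Jacobian factor $\frac{1}{\alpha}$ is in fact cleaner than the paper's, whose radial computation drops that factor in its last equality even though the final formula correctly contains $\frac{2}{\alpha}$.
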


\begin{proof}
Using a polar change of variable \citep[Corollary 2.51 and Proposition 2.54 page 79]{folland1999real},
\begin{align}
\int_{\mathbb{R}^{d}}K(\boldsymbol{u})d\boldsymbol{u} & =\int_{\mathbb{R}^{d}}(1-\left\Vert \boldsymbol{u}\right\Vert ^{\alpha})^{\beta}\mathbbm{1}_{\{\left\Vert \boldsymbol{u}\right\Vert \leq1\}}d\boldsymbol{u}\nonumber \\
 & =\frac{2\pi^{d/2}}{\Gamma\!\left(\frac{d}{2}\right)}\int^{\infty}_{0}r^{d-1}(1-r^{\alpha})^{\beta}\mathbbm{1}_{\{r\leq1\}}dr\ .\label{eq:kernel_integral_proof_1}
\end{align}
Then, 
\begin{equation}
\int^{1}_{0}r^{d-1}(1-r^{\alpha})^{\beta}dr=\frac{1}{\alpha}\int^{1}_{0}r^{\frac{d}{\alpha}-1}(1-r)^{\beta}dr=\frac{\Gamma\!\left(\frac{d}{\alpha}\right)\Gamma(\beta+1)}{\Gamma\!\left(\frac{d}{\alpha}+\beta+1\right)}\ .\label{eq:kernel_integral_proof_2}
\end{equation}
Plugging equation~\eqref{eq:kernel_integral_proof_2} into equation~\eqref{eq:kernel_integral_proof_1}
proves equation~\eqref{eq:kernel_integral}.
\end{proof}

The normalizing constant~\eqref{eq:kernel_integral} generalizes
the one obtained in \citep{duong2015beta} in the case $\alpha=2$.
The next proposition provides an explicit formula for the Fourier
transform of the kernel~\eqref{eq:kuttner_golubov_kernel}. Remark
that the Fourier transform formulas \eqref{eq:fourier_kuttner_golubov}-\eqref{eq:fourier_symmetric_beta}-\eqref{eq:fourier_askey}
provided in this Appendix can all be deduced from the formulas reported
in \citep[page~1204]{zastavnyi2006buhmann} with the change of notation
$(\delta,\mu,\nu,\alpha,t)=(\alpha,\beta+1,(d-1)/2,d-1,\left\Vert x\right\Vert )$
and by simplifying the resulting formulas. We nevertheless provide
our more detailed proofs for reference and comprehensiveness, and
because the proof can easily be adapted to other kernel functions,
as long as the integral $\int^{\infty}_{0}t^{\lambda}k(t)dt$, $\lambda\geq0$,
can be computed analytically (as done for example in \citep[subsection~14.3]{davies2002integral}
with $k(t)=(1-t^{2})^{\beta}$, covered here as a corollary in Subsection~\ref{subsec:symmetric_beta_kernels}).
\begin{prop}
\label{prop:fourier_kuttner_golubov}The Fourier transform of the
kernel $K(\boldsymbol{u})=(1-\left\Vert \boldsymbol{u}\right\Vert ^{\alpha})^{\beta}\mathbbm{1}_{\{\left\Vert \boldsymbol{u}\right\Vert \leq1\}}$,
$\boldsymbol{u}\in\mathbb{R}^{d}$, is given by
\begin{equation}
f(\boldsymbol{x})=\frac{\Gamma(\beta+1)}{\alpha2^{\frac{d}{2}-1}(2\pi)^{\frac{d}{2}}}\Psi_{1,2}\left[-\frac{\left\Vert \boldsymbol{x}\right\Vert ^{2}}{4};\begin{array}{c}
\left(\frac{d}{\alpha},\frac{2}{\alpha}\right)\\
\left(\frac{d}{\alpha}+\beta+1,\frac{2}{\alpha}\right),\left(\frac{d}{2},1\right)
\end{array}\right]\ ,\ \boldsymbol{x}\in\mathbb{R}^{d},\label{eq:fourier_kuttner_golubov}
\end{equation}
where $\Psi_{p,q}$ denotes the Fox-Wright generalized hypergeometric
function with $p$ numerator parameters and $q$ denominator parameters:
\begin{equation}
\Psi_{p,q}\left[u;\begin{array}{c}
(a_{1},b_{1}),\ldots,(a_{p},b_{p})\\
(c_{1},d_{1}),\ldots(c_{q},d_{q})
\end{array}\right]:=\sum^{\infty}_{n=0}\frac{\prod^{p}_{\ell=1}\Gamma(a_{\ell}+nb_{\ell})}{\prod^{q}_{\ell=1}\Gamma(c_{\ell}+nd_{\ell})}\frac{u^{n}}{n!}\ .\label{eq:fox_wright_function}
\end{equation}
\end{prop}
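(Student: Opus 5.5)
We start from the radial Fourier formula \eqref{eq:kernel_fourier_transform} with $k(t)=(1-t^{\alpha})^{\beta}\mathbbm{1}_{\{t\leq1\}}$. Writing $s=\left\Vert \boldsymbol{x}\right\Vert$, we have
\[
f(\boldsymbol{x})=\frac{1}{(2\pi)^{\frac{d}{2}}s^{\frac{d}{2}-1}}\int_{0}^{1}(1-t^{\alpha})^{\beta}t^{\frac{d}{2}}J_{\frac{d}{2}-1}(st)\,dt .
\]
We then insert the power series of the Bessel function,
\[
J_{\frac{d}{2}-1}(st)=\sum_{n\geq0}\frac{(-1)^{n}}{n!\,\Gamma(n+\frac{d}{2})}\Big(\frac{st}{2}\Big)^{2n+\frac{d}{2}-1}.
\]
The prefactor $s^{\frac{d}{2}-1}$ cancels against $s^{\frac{d}{2}-1}$ from each term. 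This leaves
\[
f(\boldsymbol{x})=\frac{1}{(2\pi)^{\frac{d}{2}}2^{\frac{d}{2}-1}}\sum_{n\geq0}\frac{(-1)^{n}s^{2n}}{4^{n}n!\,\Gamma(n+\frac{d}{2})}\int_{0}^{1}t^{2n+d-1}(1-t^{\alpha})^{\beta}dt .
\]

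The moment integrals are computed exactly as in the proof of Lemma~\ref{lem:kernel_integral}, with $d$ replaced by $2n+d$. The substitution $r=t^{\alpha}$ gives a Beta integral:
\[
\int_{0}^{1}t^{2n+d-1}(1-t^{\alpha})^{\beta}dt=\frac{1}{\alpha}\frac{\Gamma\!\left(\frac{d}{\alpha}+\frac{2}{\alpha}n\right)\Gamma(\beta+1)}{\Gamma\!\left(\frac{d}{\alpha}+\frac{2}{\alpha}n+\beta+1\right)} .
\]
Substituting this back, we collect $\Gamma(\beta+1)/\alpha$ into the constant. The general term becomes
\[
\frac{\Gamma(\frac{d}{\alpha}+\frac{2}{\alpha}n)}{\Gamma(\frac{d}{\alpha}+\beta+1+\frac{2}{\alpha}n)\,\Gamma(\frac{d}{2}+n)}\frac{(-s^{2}/4)^{n}}{n!},
\]
which is exactly the $n$-th term of $\Psi_{1,2}$ in \eqref{eq:fox_wright_function} with the stated parameters. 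The constant $\Gamma(\beta+1)/(\alpha 2^{\frac{d}{2}-1}(2\pi)^{\frac{d}{2}})$ also matches \eqref{eq:fourier_kuttner_golubov}.

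The main point requiring care is justifying the interchange of the series and the integral. Since $t\in[0,1]$ and $(1-t^{\alpha})^{\beta}\leq1$, the $n$-th term is bounded by $(s/2)^{2n+\frac{d}{2}-1}/(n!\,\Gamma(n+\frac{d}{2}))$. This bound is summable for every fixed $s$, so dominated convergence (or Fubini–Tonelli for series) applies. The resulting Fox–Wright series is entire, because $1+\sum d_{\ell}-\sum b_{\ell}=1+\frac{2}{\alpha}+1-\frac{2}{\alpha}=2>0$, so the formula holds for all $\boldsymbol{x}$.

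At $\boldsymbol{x}=\boldsymbol{0}$ the expression $s^{\frac{d}{2}-1}$ is singular when $d\geq3$. The formula there is obtained by continuity, or directly from \eqref{eq:f_wrt_K}: only the $n=0$ term survives and equals $\int K/(2\pi)^{d}$. This agrees with Lemma~\ref{lem:kernel_integral}, which serves as a consistency check.
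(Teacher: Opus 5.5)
Your proof is correct and follows the paper's proof essentially step by step. Both start from the radial Hankel-transform formula, integrate the Bessel power series term by term, evaluate the moments as Beta integrals via $r=t^{\alpha}$, and identify the resulting series as $\Psi_{1,2}$ with the stated constant. Your dominated-convergence justification of the series--integral interchange and your continuity argument at $\boldsymbol{x}=\boldsymbol{0}$ are additions the paper omits. That continuity argument is needed for $d=1$ too, not only for $d\geq 3$, since $J_{-1/2}$ is also singular at the origin.
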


\begin{proof}
According to equation~\eqref{eq:kernel_fourier_transform}, the Fourier
transform of the kernel~\eqref{eq:kuttner_golubov_kernel} is given
by
\[
f(\boldsymbol{x})=\frac{1}{(2\pi)^{\frac{d}{2}}\left\Vert \boldsymbol{x}\right\Vert ^{\frac{d}{2}-1}}\int^{1}_{0}t^{\frac{d}{2}}(1-t^{\alpha})^{\beta}J_{\frac{d}{2}-1}(\left\Vert \boldsymbol{x}\right\Vert t)dt\ .
\]
Then, we can use the following series representation of the Bessel
function $J_{\nu}$ \citep[equation~(8) page~40]{watson1944bessel}\citep[10.2.2]{DLMF}:
\[
J_{\nu}(z)=\sum^{\infty}_{n=0}\frac{(-1)^{n}(z/2)^{\nu+2n}}{n!\Gamma(n+\nu+1)}
\]
to obtain
\begin{align*}
f(\boldsymbol{x}) & =\frac{1}{(2\pi)^{\frac{d}{2}}\left\Vert \boldsymbol{x}\right\Vert ^{\frac{d}{2}-1}}\int^{1}_{0}t^{\frac{d}{2}}(1-t^{\alpha})^{\beta}\sum^{\infty}_{n=0}\frac{(-1)^{n}(\left\Vert \boldsymbol{x}\right\Vert t/2)^{\frac{d}{2}-1+2n}}{n!\Gamma\!\left(n+\frac{d}{2}\right)}dt\\
 & =\frac{1}{2^{\frac{d}{2}-1}(2\pi)^{\frac{d}{2}}}\sum^{\infty}_{n=0}\frac{(-1)^{n}\left\Vert \boldsymbol{x}\right\Vert ^{2n}}{4^{n}n!\Gamma\!\left(n+\frac{d}{2}\right)}\int^{1}_{0}t^{2n+d-1}(1-t^{\alpha})^{\beta}dt\ .
\end{align*}
Then, using a change of variable,
\[
\int^{1}_{0}t^{2n+d-1}(1-t^{\alpha})^{\beta}dt=\frac{1}{\alpha}\int^{1}_{0}t^{\frac{2n+d}{\alpha}-1}(1-t)^{\beta}dt=\frac{1}{\alpha}\frac{\Gamma\!\left(\frac{2n+d}{\alpha}\right)\Gamma(\beta+1)}{\Gamma\!\left(\frac{2n+d}{\alpha}+\beta+1\right)}\ ,
\]
which gives
\begin{align*}
f(\boldsymbol{x}) & =\frac{1}{\alpha2^{\frac{d}{2}-1}(2\pi)^{\frac{d}{2}}}\sum^{\infty}_{n=0}\frac{(-1)^{n}\left\Vert \boldsymbol{x}\right\Vert ^{2n}}{4^{n}n!\Gamma\!\left(n+\frac{d}{2}\right)}\frac{\Gamma\!\left(\frac{2n+d}{\alpha}\right)\Gamma(\beta+1)}{\Gamma\!\left(\frac{2n+d}{\alpha}+\beta+1\right)}\\
 & =\frac{\Gamma(\beta+1)}{\alpha2^{\frac{d}{2}-1}(2\pi)^{\frac{d}{2}}}\sum^{\infty}_{n=0}\frac{\left(-\left\Vert \boldsymbol{x}\right\Vert ^{2}/4\right)^{n}}{n!}\frac{\Gamma\!\left(\frac{2n+d}{\alpha}\right)}{\Gamma\!\left(n+\frac{d}{2}\right)\Gamma\!\left(\frac{2n+d}{\alpha}+\beta+1\right)}\\
 & =\frac{\Gamma(\beta+1)}{\alpha2^{\frac{d}{2}-1}(2\pi)^{\frac{d}{2}}}\Psi_{1,2}\left[-\frac{\left\Vert \boldsymbol{x}\right\Vert ^{2}}{4};\begin{array}{c}
\left(\frac{d}{\alpha},\frac{2}{\alpha}\right)\\
\left(\frac{d}{\alpha}+\beta+1,\frac{2}{\alpha}\right),\left(\frac{d}{2},1\right)
\end{array}\right]\ .
\end{align*}
\end{proof}

\subsection{Symmetric beta kernels\label{subsec:symmetric_beta_kernels}}

Setting $\alpha=2$ in equation~\eqref{eq:kuttner_golubov_kernel}
gives the class of multivariate symmetric beta kernels \citep{marron1988canonical,duong2015beta}:
\begin{equation}
K(\boldsymbol{u})=k(\left\Vert \boldsymbol{u}\right\Vert )=(1-\left\Vert \boldsymbol{u}\right\Vert ^{2})^{\beta}\mathbbm{1}_{\{\left\Vert \boldsymbol{u}\right\Vert \leq1\}}\ ,\ \boldsymbol{u}\in\mathbb{R}^{d},\label{eq:symmetric_beta_kernel}
\end{equation}
which contains the multivariate parabolic, biweight, and triweight
kernels (see Table~\ref{tab:kuttner_golubov}).
\begin{cor}
\label{cor:fourier_symmetric_beta}The Fourier transform of the kernel
$K(\boldsymbol{u})=(1-\left\Vert \boldsymbol{u}\right\Vert ^{2})^{\beta}\mathbbm{1}_{\{\left\Vert \boldsymbol{u}\right\Vert \leq1\}}$,
$\boldsymbol{u}\in\mathbb{R}^{d}$, is given by
\begin{equation}
f(\boldsymbol{x})=\frac{\Gamma(\beta+1)}{2^{d}\pi^{\frac{d}{2}}\Gamma\!\left(\frac{d}{2}+\beta+1\right)}{}_{\,0}F_{1}\!\left(;\frac{d}{2}+\beta+1;-\frac{\left\Vert \boldsymbol{x}\right\Vert ^{2}}{4}\right)\ ,\ \boldsymbol{x}\in\mathbb{R}^{d},\label{eq:fourier_symmetric_beta}
\end{equation}
where $_{0}F_{1}$ denotes the confluent hypergeometric function with
$0$ numerator parameter and $1$ denominator parameter:
\begin{equation}
_{\,0}F_{1}(;b;u):=\sum^{\infty}_{n=0}\frac{1}{(b)_{n}}\frac{u^{n}}{n!}\ ,\label{eq:0F1}
\end{equation}
where the rising factorial $(b)_{n}$ is defined by $(b)_{n}=\Gamma(b+n)/\Gamma(b)$.
\end{cor}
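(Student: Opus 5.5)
The plan is to specialize Proposition~\ref{prop:fourier_kuttner_golubov} to $\alpha=2$ and then simplify the Fox-Wright series term by term. With $\alpha=2$, the parameter pairs in equation~\eqref{eq:fourier_kuttner_golubov} become $\left(\frac{d}{2},1\right)$ in the numerator and $\left(\frac{d}{2}+\beta+1,1\right),\left(\frac{d}{2},1\right)$ in the denominator. The prefactor becomes $\frac{\Gamma(\beta+1)}{2\cdot2^{\frac{d}{2}-1}(2\pi)^{\frac{d}{2}}}=\frac{\Gamma(\beta+1)}{2^{d}\pi^{\frac{d}{2}}}$, since $2\cdot2^{\frac d2-1}\cdot2^{\frac d2}=2^{d}$.

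Next, I write out the $n$-th term of $\Psi_{1,2}$ from definition~\eqref{eq:fox_wright_function}:
\[
\frac{\Gamma\!\left(\frac{d}{2}+n\right)}{\Gamma\!\left(\frac{d}{2}+\beta+1+n\right)\Gamma\!\left(\frac{d}{2}+n\right)}\frac{u^{n}}{n!}, \qquad u=-\frac{\left\Vert \boldsymbol{x}\right\Vert ^{2}}{4}.
\]
The numerator gamma function cancels exactly with the second denominator gamma function, which is the key simplification available only when $\alpha=2$. This leaves $\frac{1}{\Gamma(b+n)}\frac{u^{n}}{n!}$ with $b=\frac{d}{2}+\beta+1$.

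Then I use the rising factorial identity $\Gamma(b+n)=\Gamma(b)(b)_{n}$ to factor $1/\Gamma\!\left(\frac{d}{2}+\beta+1\right)$ out of the sum. What remains is exactly the series~\eqref{eq:0F1} for ${}_{0}F_{1}(;b;u)$. Combining it with the prefactor gives equation~\eqref{eq:fourier_symmetric_beta}.

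There is no real obstacle here. The only care needed is the bookkeeping of the powers of $2$ in the prefactor. Convergence is not an issue, because the resulting ${}_{0}F_{1}$ series is entire, so the manipulations are termwise identities of absolutely convergent series. As a sanity check, at $\boldsymbol{x}=\boldsymbol{0}$ the formula should give $f(\boldsymbol{0})=(2\pi)^{-d}\int K$. Using Lemma~\ref{lem:kernel_integral} with $\alpha=2$, one has $\int K=\pi^{d/2}\Gamma(\beta+1)/\Gamma\!\left(\frac d2+\beta+1\right)$, and $(2\pi)^{-d}$ times this equals $\frac{\Gamma(\beta+1)}{2^{d}\pi^{d/2}\Gamma(\frac d2+\beta+1)}$, which matches.
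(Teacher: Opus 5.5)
Your proposal is correct and matches the paper's proof essentially line by line: you specialize Proposition~\ref{prop:fourier_kuttner_golubov} to $\alpha=2$, cancel the $\Gamma(\frac{d}{2}+n)$ factors, and factor out $\Gamma(\frac{d}{2}+\beta+1)$ via the rising factorial to obtain the ${}_0F_1$ series. Your power-of-$2$ bookkeeping in the prefactor and the consistency check at $\boldsymbol{x}=\boldsymbol{0}$ against Lemma~\ref{lem:kernel_integral} are both correct, and are useful extras that the paper does not include.
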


\begin{proof}
Setting $\alpha=2$ in equation~\eqref{eq:fourier_kuttner_golubov}
gives
\begin{align*}
f(\boldsymbol{x}) & =\frac{\Gamma(\beta+1)}{2^{d}\pi^{\frac{d}{2}}}\sum^{\infty}_{n=0}\frac{\left(-\left\Vert \boldsymbol{x}\right\Vert ^{2}/4\right)^{n}}{n!}\frac{1}{\Gamma\!\left(n+\frac{d}{2}+\beta+1\right)}\\
 & =\frac{\Gamma(\beta+1)}{2^{d}\pi^{\frac{d}{2}}\Gamma\!\left(\frac{d}{2}+\beta+1\right)}\sum^{\infty}_{n=0}\frac{\left(-\left\Vert \boldsymbol{x}\right\Vert ^{2}/4\right)^{n}}{n!}\frac{1}{\left(\frac{d}{2}+\beta+1\right)_{n}}\\
 & =\frac{\Gamma(\beta+1)}{2^{d}\pi^{\frac{d}{2}}\Gamma\!\left(\frac{d}{2}+\beta+1\right)}{}_{\,0}F_{1}\!\left(;\frac{d}{2}+\beta+1;-\frac{\left\Vert \boldsymbol{x}\right\Vert ^{2}}{4}\right)\ .
\end{align*}
\end{proof}

\subsection{Askey kernels\label{subsec:askey_kernels}}

Setting $\alpha=1$ in equation~\eqref{eq:kuttner_golubov_kernel}
gives the Askey functions \citep{askey1973radial}:
\begin{equation}
K(\boldsymbol{u})=k(\left\Vert \boldsymbol{u}\right\Vert )=(1-\left\Vert \boldsymbol{u}\right\Vert )^{\beta}\mathbbm{1}_{\{\left\Vert \boldsymbol{u}\right\Vert \leq1\}}\ ,\ \boldsymbol{u}\in\mathbb{R}^{d},\label{eq:triangular_kernel}
\end{equation}
which include the multivariate triangular kernel as a particular case
(see Table~\ref{tab:kuttner_golubov}).
\begin{cor}
\label{cor:fourier_askey}The Fourier transform of the kernel $K(\boldsymbol{u})=(1-\left\Vert \boldsymbol{u}\right\Vert )^{\beta}\mathbbm{1}_{\{\left\Vert \boldsymbol{u}\right\Vert \leq1\}}$,
$\boldsymbol{u}\in\mathbb{R}^{d}$, is given by
\begin{equation}
f(\boldsymbol{x})=\frac{1}{2^{\frac{d}{2}-1}(2\pi)^{\frac{d}{2}}}\frac{\Gamma(\beta+1)\Gamma(d)}{\Gamma\!\left(\frac{d}{2}\right)\Gamma(d+\beta+1)}{}_{\,1}F_{2}\!\left(\frac{d+1}{2};\frac{d+1}{2}+\frac{\beta}{2},\frac{d}{2}+1+\frac{\beta}{2};-\frac{\left\Vert \boldsymbol{x}\right\Vert ^{2}}{4}\right),\ \boldsymbol{x}\in\mathbb{R}^{d},\label{eq:fourier_askey}
\end{equation}
where $_{1}F_{2}$ denotes the confluent hypergeometric function with
$1$ numerator parameter and $2$ denominator parameters:
\begin{equation}
_{\,1}F_{2}(a_{1};b_{1},b_{2};u):=\sum^{\infty}_{n=0}\frac{(a_{1})_{n}}{(b_{1})_{n}(b_{2})_{n}}\frac{u^{n}}{n!}\ .\label{eq:1F2}
\end{equation}
\end{cor}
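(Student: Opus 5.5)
We specialize Proposition~\ref{prop:fourier_kuttner_golubov} to $\alpha=1$ and then rewrite the resulting Fox--Wright series as a ${}_1F_2$ series, in the same way as the proof of Corollary~\ref{cor:fourier_symmetric_beta}. With $\alpha=1$, the series obtained in the proof of Proposition~\ref{prop:fourier_kuttner_golubov} reads
\[
f(\boldsymbol{x})=\frac{\Gamma(\beta+1)}{2^{\frac{d}{2}-1}(2\pi)^{\frac{d}{2}}}\sum_{n=0}^{\infty}\frac{\left(-\left\Vert \boldsymbol{x}\right\Vert ^{2}/4\right)^{n}}{n!}\,\frac{\Gamma(d+2n)}{\Gamma\!\left(n+\frac{d}{2}\right)\Gamma(d+2n+\beta+1)}.
\]
Unlike the $\alpha=2$ case, the Gamma factors do not cancel directly. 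The two arguments $d+2n$ and $d+2n+\beta+1$ grow in steps of $2$, so they have to be converted into Pochhammer symbols in $n$ with unit steps.

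The key tool is the Legendre duplication formula, in its Pochhammer form
\[
\Gamma(c+2n)=\Gamma(c)\,2^{2n}\left(\tfrac{c}{2}\right)_{n}\left(\tfrac{c+1}{2}\right)_{n}.
\]
Applied with $c=d$, it gives $\Gamma(d+2n)=\Gamma(d)4^{n}(\frac d2)_n(\frac{d+1}{2})_n$. Applied with $c=d+\beta+1$, it gives $\Gamma(d+\beta+1+2n)=\Gamma(d+\beta+1)4^{n}(\frac{d+\beta+1}{2})_n(\frac{d+\beta}{2}+1)_n$.

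We also write $\Gamma(n+\frac d2)=\Gamma(\frac d2)(\frac d2)_n$. The factor $(\frac d2)_n$ then cancels between numerator and denominator, and so does $4^n$. The $n$-th term becomes
\[
\frac{\Gamma(d)}{\Gamma(\frac d2)\Gamma(d+\beta+1)}\cdot\frac{(\frac{d+1}{2})_n}{(\frac{d+1}{2}+\frac\beta2)_n(\frac d2+1+\frac\beta2)_n}\cdot\frac{(-\left\Vert \boldsymbol{x}\right\Vert ^2/4)^n}{n!},
\]
which is exactly the general term of the ${}_1F_2$ function in \eqref{eq:fourier_askey}. Collecting the constant prefactor $\Gamma(\beta+1)\Gamma(d)/\big(\Gamma(\frac d2)\Gamma(d+\beta+1)\big)$ together with $1/\big(2^{\frac d2-1}(2\pi)^{\frac d2}\big)$ then yields \eqref{eq:fourier_askey}.

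The only delicate step is the bookkeeping in the duplication formula: checking that the powers of $2$ cancel exactly, and matching $\frac{d+\beta+1}{2}$ and $\frac{d+\beta+2}{2}$ with the parameters $\frac{d+1}{2}+\frac{\beta}{2}$ and $\frac d2+1+\frac\beta2$. No convergence issue arises, since ${}_1F_2$ is entire and the termwise integration was already justified in Proposition~\ref{prop:fourier_kuttner_golubov}.
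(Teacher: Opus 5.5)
Your proof is correct and takes the paper's route: you specialize Proposition~\ref{prop:fourier_kuttner_golubov} to $\alpha=1$, then use the Legendre duplication formula to turn $\Gamma(d+2n)$ and $\Gamma(d+2n+\beta+1)$ into unit-step Pochhammer symbols. Writing the duplication formula in its Pochhammer form $(c)_{2n}=4^{n}(\frac{c}{2})_n(\frac{c+1}{2})_n$ simply merges the paper's two uses of Legendre (once on each term, then again to simplify the constant prefactor) into a single step.
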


\begin{proof}
Setting $\alpha=1$ in equation~\eqref{eq:fourier_kuttner_golubov}
and using the Legendre duplication formula $\Gamma(2z)=\frac{2^{2z-1/2}}{\sqrt{2\pi}}\Gamma(z)\Gamma(z+1/2)$
\citep[5.5.5]{DLMF} gives
\begin{align*}
f(\boldsymbol{x}) & =\frac{\Gamma(\beta+1)}{2^{\frac{d}{2}-1}(2\pi)^{\frac{d}{2}}}\sum^{\infty}_{n=0}\frac{\left(-\left\Vert \boldsymbol{x}\right\Vert ^{2}/4\right)^{n}}{n!}\frac{\Gamma(2n+d)}{\Gamma\!\left(n+\frac{d}{2}\right)\Gamma(2n+d+\beta+1)}\\
 & =\frac{\Gamma(\beta+1)}{2^{\frac{d}{2}+\beta}(2\pi)^{\frac{d}{2}}}\sum^{\infty}_{n=0}\frac{\left(-\left\Vert \boldsymbol{x}\right\Vert ^{2}/4\right)^{n}}{n!}\frac{\Gamma\!\left(n+\frac{d+1}{2}\right)}{\Gamma\!\left(n+\frac{d+1}{2}+\frac{\beta}{2}\right)\Gamma\!\left(n+\frac{d}{2}+1+\frac{\beta}{2}\right)}\\
 & =\frac{1}{2^{\frac{d}{2}-1}(2\pi)^{\frac{d}{2}}}\frac{\Gamma(\beta+1)\Gamma(d)}{\Gamma\!\left(\frac{d}{2}\right)\Gamma(d+\beta+1)}{}_{\,1}F_{2}\!\left(\frac{d+1}{2};\frac{d+1}{2}+\frac{\beta}{2},\frac{d}{2}+1+\frac{\beta}{2};-\frac{\left\Vert \boldsymbol{x}\right\Vert ^{2}}{4}\right)
\end{align*}
after simplifying the multiplicative constant using the Legendre duplication
formula again.
\end{proof}

\begin{rem}
Using the Gauss duplication formula $\Gamma(nz)=\frac{n^{nz-1/2}}{(2\pi)^{(n-1)/2}}\prod^{n-1}_{k=0}\Gamma\!\left(z+\frac{k}{n}\right)$,
$n\in\mathbb{N}^{*}$ \citep[5.5.6]{DLMF} makes it possible to express
the Fourier transform \eqref{eq:fourier_kuttner_golubov} in terms
of the $_{p}F_{q}$ hypergeometric function whenever $\alpha=2/n$
where $n\in\mathbb{N}^{*}$. More precisely,
\begin{align}
 & f(\boldsymbol{x})=\frac{1}{\alpha2^{\frac{d}{2}-1}(2\pi)^{\frac{d}{2}}}\frac{\Gamma(\beta+1)}{\Gamma\!\left(\frac{d}{2}\right)}\frac{\Gamma\!\left(\frac{d}{\alpha}\right)}{\Gamma\!\left(\frac{d}{\alpha}+\beta+1\right)}\sum^{\infty}_{m=0}\frac{\left(-\left\Vert \boldsymbol{x}\right\Vert ^{2}/4\right)^{m}}{m!}\frac{\prod^{n-1}_{k=1}\left(\frac{d}{2}+\frac{k}{n}\right)_{m}}{\prod^{n-1}_{k=0}\left(\frac{d}{2}+\frac{\beta+1}{n}+\frac{k}{n}\right)_{m}}\nonumber \\
 & =\frac{\Gamma(\beta+1)\Gamma\!\left(\frac{d}{\alpha}\right)}{\alpha2^{\frac{d}{2}-1}(2\pi)^{\frac{d}{2}}\Gamma\!\left(\frac{d}{2}\right)\Gamma\!\left(\frac{d}{\alpha}+\beta+1\right)}{}_{\,n-1}F_{n}{\scriptstyle \left(\frac{d}{2}\!+\!\frac{1}{n},\frac{d}{2}\!+\!\frac{2}{n},\ldots,\frac{d}{2}\!+\!\frac{n-1}{n};\frac{d}{2}\!+\!\frac{\beta+1}{n},\frac{d}{2}\!+\!\frac{\beta+2}{n},\ldots,\frac{d}{2}\!+\!\frac{\beta+n}{n};-\frac{\left\Vert \boldsymbol{x}\right\Vert ^{2}}{4}\right)}\label{eq:fourier_askey_2_n}
\end{align}
when $\alpha=2/n$ with $n\in\mathbb{N}^{*}$. Equation~\eqref{eq:fourier_askey_2_n}
generalizes both equation~\eqref{eq:fourier_symmetric_beta} ($\alpha=2$,
$n=1$) and equation~\eqref{eq:fourier_askey} ($\alpha=1$, $n=2$).
\end{rem}

\end{document}